\documentclass[journal,10pt]{IEEEtran}
\usepackage[utf8]{inputenc}

\usepackage{amsmath}
\usepackage{amssymb}
\usepackage{amsthm}
\usepackage{amsfonts}
\usepackage{cite}

\usepackage{tabularx}
\usepackage{makecell}
\usepackage{array}

\usepackage[caption=false]{subfig}
\usepackage{booktabs}
\usepackage{bbm}
\usepackage{mathrsfs}

\DeclareMathOperator*{\argmin}{arg\,min}

\newcommand{\Rmnum}[1]{\expandafter\@slowromancap\romannumeral #1@}
\usepackage{graphicx}

\usepackage{epsfig}
\usepackage[numbers,sort&compress]{natbib}
\usepackage{xcolor}
\usepackage{color}
\usepackage{enumerate}
\usepackage{extarrows}
\usepackage[ruled,linesnumbered,lined]{algorithm2e}  
\usepackage{algorithmic}
\usepackage{tensor}
\usepackage[colorlinks,linkcolor=blue,anchorcolor=blue,citecolor=blue,urlcolor=black]{hyperref}
\usepackage{url}

\begin{document}

\title{From High-Level Requirements to KPIs: Conformal Signal Temporal Logic Learning for Wireless Communications}
\author{ Jiechen Chen, \IEEEmembership{Member,~IEEE}, Michele Polese,  \IEEEmembership{Member,~IEEE},~and Osvaldo Simeone,~\IEEEmembership{Fellow,~IEEE}
\thanks{J. Chen is with the Department of Engineering, King’s College London, London, WC2R 2LS, UK (email: jiechen.chen@kcl.ac.uk). M. Polese is with the Institute for Intelligent Networked Systems, Northeastern University, Boston, MA 02115 USA (email: m.polese@northeastern.edu). O. Simeone is with the Institute for Intelligent Networked Systems, Northeastern University London, One Portsoken Street, London, E1 8PH,
UK (email: o.simeone@northeastern.edu). \\
This work was supported by the European Research Council (ERC) under the European Union’s Horizon Europe Programme (grant agreement No. 101198347), by an Open Fellowship of the EPSRC (EP/W024101/1), by the EPSRC project (EP/X011852/1), and by the U.S. NSF under grant TI-2449452.
 }
\vspace*{-0.2cm}
}

\maketitle

\IEEEpeerreviewmaketitle
\newtheorem{definition}{\underline{Definition}}[section]
\newtheorem{fact}{Fact}
\newtheorem{assumption}{Assumption}
\newtheorem{theorem}{Theorem}
\newtheorem{lemma}{\underline{Lemma}}[section]
\newtheorem{proposition}{\underline{Proposition}}[section]
\newtheorem{corollary}[proposition]{\underline{Corollary}}
\newtheorem{example}{\underline{Example}}[section]
\newtheorem{remark}{\underline{Remark}}[section]
\newcommand{\mv}[1]{\mbox{\boldmath{$ #1 $}}}
\newcommand{\mb}[1]{\mathbb{#1}}
\newcommand{\Myfrac}[2]{\ensuremath{#1\mathord{\left/\right.\kern-\nulldelimiterspace}#2}}
\newcommand\Perms[2]{\tensor[^{#2}]P{_{#1}}}
\newcommand{\note}[1]{[\textcolor{red}{\textit{#1}}]}

\begin{abstract}
Softwarized radio access networks (RANs), such as those based on the Open RAN (O-RAN) architecture, generate rich streams of key performance indicators (KPIs) that can be leveraged to extract actionable intelligence for network optimization. However, bridging the gap between low-level KPI measurements and high-level requirements, such as quality of experience (QoE), requires methods that are both {relevant}, capturing temporal patterns predictive of user-level outcomes, and {interpretable}, providing human-readable insights that operators can validate and act upon. This paper introduces {conformal signal temporal logic learning} (C-STLL), a framework that addresses both requirements. C-STLL leverages signal temporal logic (STL), a formal language for specifying temporal properties of time series, to learn interpretable formulas that distinguish KPI traces satisfying high-level requirements from those that do not. To ensure reliability, C-STLL wraps around existing STL learning algorithms with a conformal calibration procedure based on the Learn Then Test (LTT) framework. This procedure produces a {set} of STL formulas with formal guarantees: with high probability, the set contains at least one formula achieving a user-specified accuracy level. The calibration jointly optimizes for reliability, formula complexity, and diversity through principled acceptance and stopping rules validated via multiple hypothesis testing. Experiments using the ns-3 network simulator on a mobile gaming scenario demonstrate that C-STLL effectively controls risk below target levels while returning compact, diverse sets of interpretable temporal specifications that relate KPI behavior to QoE outcomes.
\end{abstract}

\begin{IEEEkeywords}
Radio access networks, signal temporal logic, signal temporal logic learning, learn then test.
\end{IEEEkeywords}

\section{Introduction}\label{sec:intro}

\subsection{Motivation}

The wealth of data produced by modern wireless networks creates a unique opportunity to extract actionable intelligence that can drive network optimization, anticipate failures, and ensure quality of service guarantees \cite{brik2022deep}. In particular, network controllers in disaggregated architectures like O-RAN can aggregate rich streams of key performance indicators (KPIs), at the radio access network (RAN), including throughput, latency, packet loss, and signal quality \cite{polese2023understanding, bonati2021intelligence, lacava2023programmable}. Beyond relevance, intelligence extracted from the RAN must also be \emph{interpretable} \cite{guo2020explainable, brik2024explainable,fiandrino2023explora}, allowing network operators to understand \emph{why} a decision was made either to validate decisions against domain expertise or to satisfy regulatory requirements. 

However, extracting interpretable intelligence from RAN traces presents a challenge given the gap between the \emph{low-level KPI measurements} readily available at the RAN and the \emph{high-level behavioral requirements} that matter to network operators and end users (see Fig. \ref{gaming}). Explanations about QoE outcomes must be constructed from the only available data, namely low-level RAN KPI trace logs. This necessitates the development of novel interpretable mechanisms to bridge the semantic gap between QoE targets and RAN KPI traces.

\begin{figure}[t]
    \centering
\includegraphics[width=1.05\linewidth]{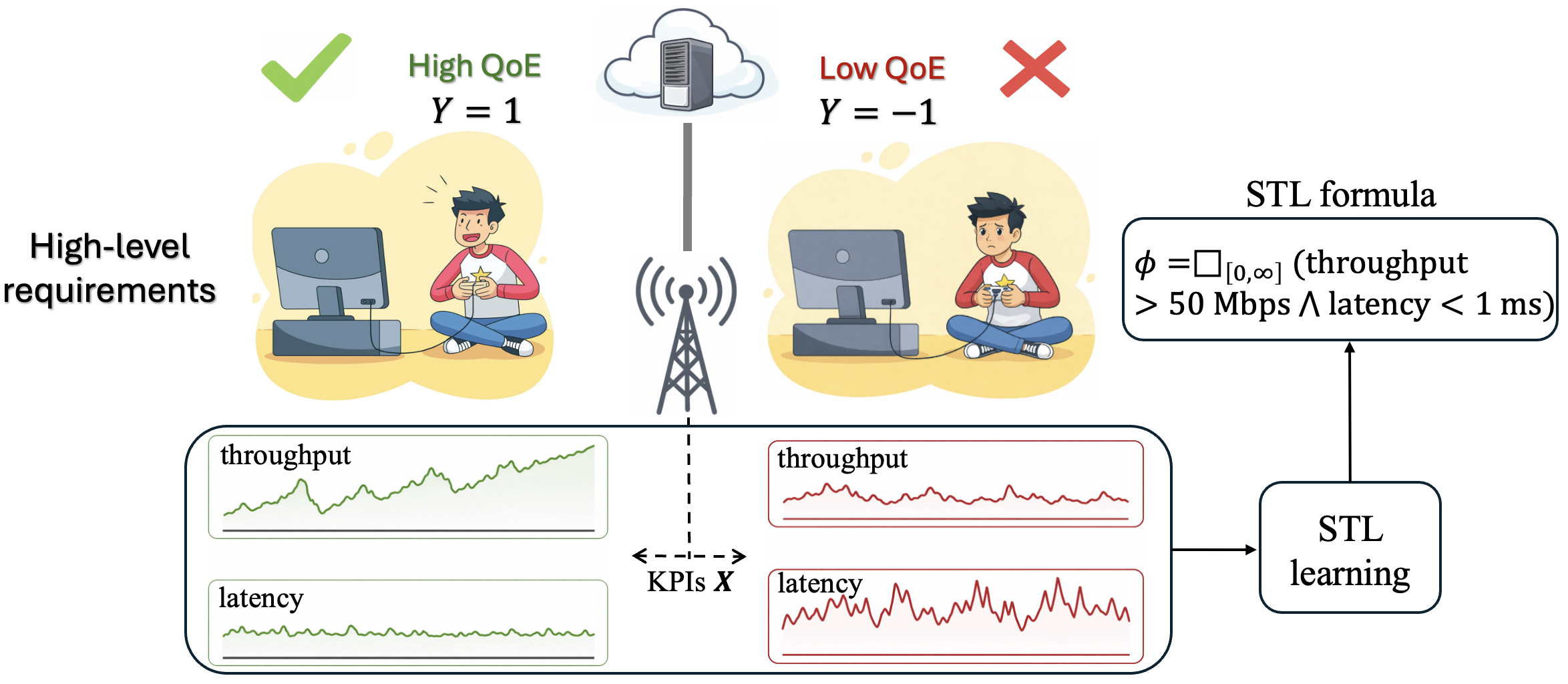}
    \caption{From high-level requirements to KPI requirements: The radio access network (RAN) collects traces of KPIs such as throughput and latency over time. High-level QoE evaluations are provided by end users. The goal of this work is to infer interpretable properties of KPI traces that are predictive of whether high-level requirements are satisfied or not. These properties are expressed by the formal language signal temporal logic (STL), which provides tools to describe temporal constraints as time series.  }
    \label{gaming}
\end{figure}

\emph{Signal Temporal Logic} (STL) offers a principled framework for expressing and learning interpretable temporal properties of time-series data \cite{maler2004monitoring, donze2010robust}. An STL formula such as \begin{equation}\label{eq:firstSTL}\Box_{[0,\infty]}((\text{latency} < 100~\text{ms}) \land \Diamond_{[0,5]}(\text{throughput} > 50~\text{Mbps}))\end{equation} provides a human-readable specification that directly relates KPI behavior to requirements. For example, the STL formula in (\ref{eq:firstSTL}) states that latency must always remain below 100 ms and (denoted as $\land$)  throughput must exceed 50 Mbps at least once within every time slot consisting of a 5-TTI window. Unlike feature attributions from standard explainable AI methods such as SHAP or LIME \cite{guo2020explainable}, STL formulas constitute \emph{formal specifications} that can be verified, composed, and directly translated into network policies.

That said, existing \emph{STL learning} (STLL) methods, while effective at extracting candidate formulas from labeled data, provide no formal guarantees on the reliability of learned specifications \cite{li2024tlinet, bombara2021offline}. A formula that performs well on training data may fail to generalize, potentially leading to costly misclassifications in deployment. This limitation is particularly concerning in wireless communications, where network decisions based on unreliable predictions can degrade user experience or violate service level agreements (SLAs).

\subsection{Related Work}

\subsubsection{Signal temporal logic}

STL enables precise specification of temporal properties \cite{maler2004monitoring}, and the quantitative semantics of STL, known as robustness, provide a real-valued measure indicating the degree to which a signal satisfies a formula \cite{donze2010robust, fainekos2009robustness, bombara2021offline}. This quantitative interpretation enables gradient-based optimization and serves as a natural loss function for learning \cite{li2024tlinet}.

While STL learning has been successfully applied in cyber-physical systems, robotics, and autonomous vehicles \cite{bartocci2018specification}, its application to wireless communications remains largely unexplored. In particular, recent work has begun to explore STL formalisms for network verification and monitoring \cite{panizo2025runtime}.

\subsubsection{Explainable AI for wireless communications}

The need for explainable AI (XAI) in wireless networks has been increasingly recognized as ML models are deployed for critical network functions \cite{guo2020explainable,brik2024explainable, adadi2018peeking,luan2023channelformer}. Several frameworks have been developed to provide interpretability specifically for AI-driven O-RAN control. These include EXPLORA \cite{fiandrino2023explora}, which uses attributed graphs to link DRL-based agent actions to wireless network context; XAI-on-RAN, which integrates GPU-accelerated explainability techniques for real-time operation~\cite{xaionran2025}; as well as work on RAN slicing and resource allocation \cite{rezazadeh2023explanation},  anomaly detection using interpretable autoencoders \cite{basaran2025xainomaly}, and  energy efficiency optimization~\cite{energy2025xai}.

However, existing XAI methods for wireless predominantly rely on statistical attribution, and the application of formal specification languages to wireless network intelligence remains an open research direction.

\subsubsection{Conformal prediction and reliability guarantees}

Conformal prediction  is a distribution-free framework for constructing prediction sets with guaranteed coverage \cite{vovk2005algorithmic, shafer2008tutorial}. 
It has recently been applied to wireless communications for tasks including demodulation and neuromorphic wireless edge intelligence  \cite{cohen2023calibrating, cohen2023guaranteed,simeone2025conformal, 10682971}.  Reference \cite{cairoli2025conformal} studied the problem of predictive monitoring based on STL and conformal prediction.  The Learn Then Test (LTT) framework recasts risk control as multiple hypothesis testing, enabling control of general risk functions beyond miscoverage \cite{angelopoulos2025learn}.

\begin{figure*}[htp]
    \centering
\includegraphics[width=0.8\linewidth]{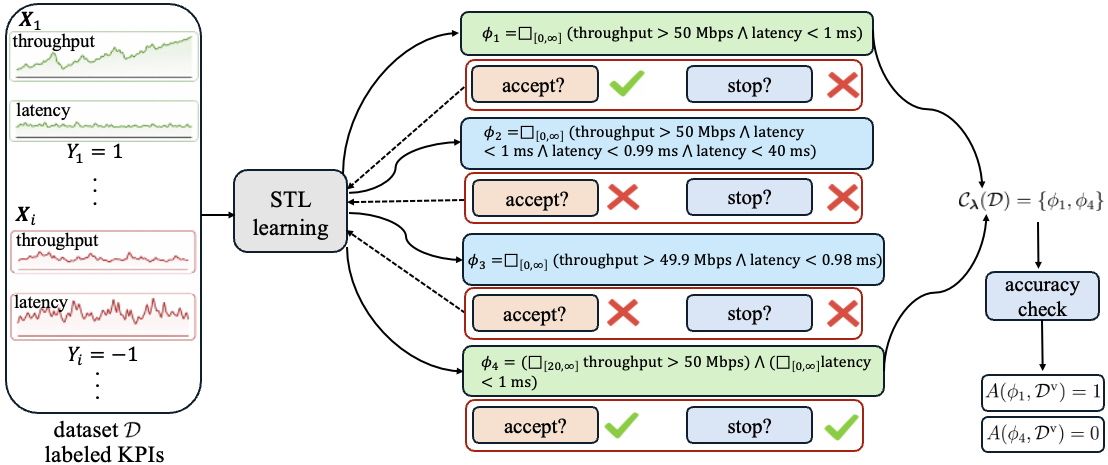}
    \caption{Illustration of the proposed conformal STL learning (C-STLL) scheme. As illustrated in Fig. \ref{gaming},  KPI traces are labeled as $Y=1$ or $Y=-1$ based on whether or not they correspond to settings in which positive or negative higher-level requirements are satisfied. Via any STLL algorithm \cite{li2024tlinet}, the dataset $\mathcal{D}$ of labeled KPI traces can be leveraged to produce an STL formula $\phi$,   providing interpretable conditions on the KPI time series that are consistent with positive labels. However, this approach provides no guarantees on the reliability  and accuracy of the formula $\phi$. The proposed C-STLL wraps around any STLL procedure to produce a \emph{set} of STL formulas with \emph{reliability and quality assurance}. Through a sequential application of the STL algorithm, each newly learned formula is either accepted for inclusion in the set or rejected, and the process is continued or stopped, based on calibrated threshold-based procedures. In this example, formula  $\phi_2$ is rejected due to its high complexity, while  $\phi_3$ is rejected for being  too similar to $\phi_1$, which is already in the set.  }
    \label{model}
\end{figure*}

\subsubsection{QoE prediction}
Predicting user QoE from network-level measurements \cite{barakabitze2019qoe} was addressed in reference \cite{charonyktakis2016user} for voice-over-IP (VoIP) systems, while reference \cite{kougioumtzidis2023deep} focused on predicting future virtual reality QoE levels in O-RAN networks. These approaches inherit the limitations of black-box models in terms of interpretability and statistical validity.

\subsection{Main Contributions}

This paper introduces \emph{conformal STL learning} (C-STLL), a novel framework for extracting interpretable, reliability-guaranteed temporal specifications from wireless KPI traces. The main contributions are as follows:

\begin{enumerate}
    \item \textbf{ STL learning to wireless communications:} We demonstrate that STL provides a natural and effective formalism for expressing temporal properties of KPI traces that are predictive of high-level QoE requirements. Unlike black-box ML models, learned STL formulas are human-readable and verifiable.
    
    \item \textbf{Conformal calibration for reliable specification learning:} We develop C-STLL, a calibration methodology that wraps around any STLL algorithm to produce a \emph{set} of STL formulas with formal reliability guarantees (see Fig. \ref{model}). Specifically, C-STLL ensures that, with high probability, the returned set contains at least one formula achieving a target accuracy level. This is accomplished through a novel combination of sequential STL formula generation, acceptance rules based on complexity and diversity criteria, and multiple hypothesis testing via the Learn Then Test framework \cite{angelopoulos2025learn,laufer2022efficiently,quach2023conformal}.

    \item \textbf{Experimental validation:} We evaluate C-STLL using the ns-3 network simulator, modeling a mobile gaming scenario with realistic traffic patterns and network conditions. Experiments demonstrate that C-STLL successfully controls risk below specified tolerance levels while producing compact, diverse sets of interpretable STL formulas. Comparisons with ablated variants of the proposed framework confirm the importance of each component of the proposed framework.
\end{enumerate}

The remainder of this paper is organized as follows. Section~\ref{sec:stl} reviews the syntax and semantics of STL. Section~\ref{sec:stll} formulates the STL learning problem and describes existing solution approaches. Section~\ref{sec:cstll} presents the proposed C-STLL framework, including the sequential generation procedure and the calibration methodology based on multiple hypothesis testing. Section~\ref{sec:experiments} provides experimental results, and Section~\ref{sec:conclusion} concludes the paper.

\section{Background: Signal Temporal Logic } \label{sec:stl}
Signal Temporal Logic (STL) is a formal language for specifying and analyzing the temporal properties of time-series data. In this section, we present the standard STL syntax together with quantitative metrics used to evaluate the degree of satisfaction of given STL properties. 

\subsection{STL Formulas}

Let $\mv X=[\mv x_0, \mv x_1, \ldots, \mv x_{T-1}]$ be a finite discrete-time trajectory with $\mv x_t =[x_{1,t},...,x_{d,t}] \in \mathbbm{R}^d$ denoting the system state at time $t$, where integer $d$ is the dimension of the state. In wireless communications, each vector $\mv x_t$ may represent a collection of KPIs extracted over transmission time intervals (TTIs) indexed as $t=0,1,\ldots$, e.g., one entry $x_{i,t}$ of vector $\mv x_t$ may report the throughput of some user and another entry, $x_{j,t}$,  the latency of the user. 

An STL formula $\phi$ specifies a property that time series $\mv X$ may exhibit or not. Specifically, one writes \begin{align}
    (\mv X, t) \models \phi,
\end{align}
if sequence $\mv X$ satisfies property $\phi$ from time $t$ onward, so the symbol $\models$ reads ``satisfies''. For example, a sequence  of KPIs, $\mv X$, corresponding to nominal behavior of a radio access network may satisfy the property $\phi$ that the latency of a certain group of users does not drop below 1 ms for more than a given number of TTIs starting from some TTI $t$. 

STL properties $\phi$ are built from atomic predicates of the form 
\begin{align}
    \mu ::= \mv a^\top \mv x > b, \label{mu}
\end{align}
with $\mv a\in \mathbbm{R}^d$ and $b\in \mathbbm{R}$, where the symbol $::=$ is to be read as ``is defined as". Specifically, the basic STL property  $(\mv X,t) \models \phi$ is equivalent to the condition $\mv a^\top \mv x_t > b$ requiring the predicate (\ref{mu}) to be valid for all times starting from $t$. As we will discuss below, more complex properties can be built from basic predicates of the form \eqref{mu}.

For example, consider the property that the throughput $x_{i,t}$ of a given user must be larger  than a threshold $b= 50 $ Mbit per second (Mbps). Using a ``one-hot'' vector $\mv a$ with a single one in position $i$ and all other zero entries, denoted as $\mv e_i$, this can be expressed as the STL property $(\mv X,t) \models (\mv e_i ^\top \mv x =x_i > 50  \text{ Mbps})$. 

More generally,  STL formulas can be constructed by leveraging the \emph{eventually} operator  $\Diamond_{[t_1, t_2]}$, the \emph{always} operator $\Box_{[t_1, t_2]}$, as well as logical operations such as AND and OR. In particular, using the eventually operator, the condition $(\mv X, t) \models \phi$ with the STL property 
\begin{align}
    \phi ::= \Diamond_{[t_1, t_2]} \mu \label{dia}
\end{align}
is true if the condition $\mu$ holds at \emph{any} time $t^{\prime}\in[t+t_1,t+t_2]$, i.e., in an interval of the form $[t_1,t_2]$ with time measured starting from index $t$. In a similar way, using the always operator, the condition $(\mv X, t) \models \phi$ with the STL property 
\begin{align}
    \phi ::= \Box_{[t_1, t_2]} \mu \label{box}
\end{align}
is valid if the property $\mu$ holds for \emph{all} $t^{\prime}\in[t+t_1,t+t_2]$. For instance, continuing the wireless example, the STL property $(\mv X, 0) \models \Box_{[0, \infty]} ( x_i> 50
\text{Mbps})$ is true if the throughput $x_{i,t'}$ is no smaller than $50
\text{Mbps}$ for all TTIs starting at time index $0$. 

In broadest generality, formulas such as \eqref{dia} and \eqref{box} can be composed recursively, making use also of logical operations, namely AND ($\land$) and OR ($\lor$). Specifically, one can recursively define an STL property via the rule
\begin{align}
    \phi ::= \mu| \land_{i=1}^n \phi_i |\lor_{i=1}^n \phi_i| \Diamond_{[t_1, t_2]} \phi | \Box_{[t_1, t_2]} \phi, \label{stl}
\end{align}
where $\mu$ is an atomic predicate (\ref{mu}), and each step of the recursion applies one of the options in \eqref{stl}, which are separated by a bar $``|"$ following a standard convention  \cite{donze2013signal}.

For instance, consider the property for a given user that the throughout $x_{i,t}$ must be larger than 50 Mbit per second (Mbps) and the latency $x_{j,t}$ must be smaller than 1 ms for all TTIs starting at TTI $t$. The STL property
\begin{align}
    (\mv X, 0) \models \Box_{[0,\infty)}\big(x_{i}> 50 ~\text{Mbps} \land \Diamond_{[0, 5]}(x_{j}\leq  1 \text{ ms}) \big),
\end{align}
is true if for each TTI $t$ starting from $t=0$, the throughput is larger than 50 Mbps and the latency is  no larger than 1 ms in at least one of 5 consecutive TTIs.

\subsection{Robustness}

STL is equipped with a robustness scores, which assign to each formula $\phi$ and signal $\mv X$ a real-valued function $\rho(\mv X, \phi, t) \in\mathbbm{R}$ that measures the degree to which $\phi$ is satisfied when the evaluation begins at time $t$ \cite{maler2004monitoring}. By definition of robustness measure, a signal  satisfies the formula $\phi$ from time $t$ if and only if its robustness is positive, i.e.,
\begin{align}
    (\mv X, t)\models \phi \Longleftrightarrow
 \rho(\mv X, \phi, t)>0,
\end{align}
and larger robustness values $\rho(\mv X, \phi, t)$ indicate stronger satisfaction.

For an atomic predicate $\mu$ in \eqref{mu}, the robustness is defined as
\begin{equation}
\rho(\mv X, \mu, t) = \mv a^\top \mv x_t - b. \label{affine}
\end{equation} Accordingly, the robustness is positive if $\mv a^\top \mv x_t >b$.  When considering a general STL formula constructed via the rule (\ref{stl}), the robustness can be obtained by using the following rules. The robustness of the logical AND of more properties,  $\phi=\land_{i=1}^n \phi_i$, is the minimum
\begin{align}
\rho(\mv X, \land_{i=1}^n \phi_i, t)= \min_{i=1,\ldots,n} \rho(\mv X,\phi_i, t),  \label{s1}
\end{align} since all properties $\{\phi_i\}_{i=1}^n$ must be satisfied; while for the logical-OR formula $\phi=\lor_{i=1}^n \phi_i$, the robustness is the maximum
\begin{align}
\rho(\mv X, \lor_{i=1}^n \phi_i, t)= \max_{i=1,\ldots,n} \rho(\mv X,\phi_i, t), \label{s2}
\end{align} since the validity of any property $\phi_i$ is sufficient for the validity of the entire property $\phi$. Finally, in a similar way, for the eventually operator, we have \begin{align}
\rho(\mv X, \Diamond_{[t_1,t_2]} \phi, t)= \max_{t' \in [t+t_1,t+t_2]} \rho(\mv X, \phi, t'), \label{te2}
\end{align} and for the always operator the robustness is
\begin{align} 
\rho(\mv X, \Box_{[t_1,t_2]} \phi, t)= \min_{t' \in [t+t_1,t+t_2]} \rho(\mv X, \phi, t'). \label{te1}
\end{align}

\section{Learning STL Properties} \label{sec:stll}

In this section, we discuss the problem of learning from labeled traces an interpretable prediction rule based on STL by following \cite{li2024tlinet}. The goal is to infer a formal property, in the form of an STL formula $\phi$, that lower-level traces $\mv X$ satisfying higher-level requirements are likely to meet.

\subsection{Formulating STL Learning} \label{stll}
Consider a system in which traces $\mv X$ can be  collected that describe the evolution of the variables of interest, here the KPIs of a RAN. To each trace $\mv X$, an end user can potentially assign a binary label  $Y\in\{1,-1\}$ indicating whether trajectory $\mv X$ satisfies  given higher-level requirements, $Y=1$, or violates them, $Y=-1$. Henceforth, we will refer to a sequence $\mv X$ labeled $Y=1$ as \emph{positive}, and to  a sequence $\mv X$ with $Y=-1$ as \emph{negative}. As in Fig. \ref{gaming}, the higher-level requirements may refer to the QoE of end users when interacting with certain applications over the network.

We are interested in the problem of \emph{STL learning} (STLL), where the goal is to infer an STL formula ${\phi}$ that can effectively distinguish between traces with positive and negative labels. Specifically,  we would like to identify an STL formula $\phi$ that is likely to be satisfied by traces $\mv X$ with positive labels, i.e., $(\mv X,0) \models \phi$ if $Y=1$, while being violated by traces with negative labels, i.e., $(\mv X,0) \not\models \phi$ if $Y=-1$. Note that the starting time is chosen as $t=0$ by convention. 

Accordingly, the classification decision $\hat{Y}$ associated with an STL property $\phi$ is given by \begin{equation}
\hat{Y}=\begin{cases}
1, & \text{ if }(\mathbf{X},0)\models\phi,\\
-1, & \text{ if }(\mathbf{X},0)\not\models\phi, \label{pred0}
\end{cases}
\end{equation} or equivalently in terms of robustness \begin{equation}
\hat{Y}=\begin{cases}
1, & \text{ if } \rho(\mv X, \phi, 0)>0,\\
-1, & \text{ if }\rho(\mv X, \phi, 0)<0. \label{pred}
\end{cases}
\end{equation} Accordingly, the robustness $\rho(\mv X, \phi, 0)$ serves as a logit in conventional binary classification. 

The STLL problem is formulated as a supervised learning task. To this end, assume access to a labeled dataset $\mathcal{D}=\{(\mv X_i, Y_i)\}_{i=1}^{|\mathcal{D}|}$ in which each pair $(\mv X_i, Y_i)$ is drawn i.i.d. from an underlying ground-truth distribution $P_{\scalebox{0.7}{\mv X}, Y}$ following the standard frequentistic learning framework (see e.g.,  \cite{simeone2022machine}). The data distribution describes the current, ground-truth, relationship between the traces $\mv X$ and the label $Y$. 

Using \eqref{pred}, for each labeled pair $(\mv X, Y)\sim P_{\scalebox{0.7}{\mv X}, Y}$, the \emph{classification margin}, given by the product $Y\rho(\mv X,\phi,0)$, is positive if the STL formula correctly marks a trace $\mv X$ as positive or negative; while a negative classification margin $Y\rho(\mv X,\phi,0)$ indicates an incorrect classification decision. Accordingly, any non-increasing function of the classification margin can serve as a loss function for the STL problem (see, e.g., \cite[Chapter 6]{simeone2022machine}). Specifically, in \cite{li2024tlinet}, the shifted hinge loss loss
\begin{align}
    \ell(\phi, \mv X, Y) = \text{ReLU}(\beta -Y \rho(\mv X, \phi, 0)) - \gamma \beta \label{loss}
\end{align}
was adopted, where the hyperparameters $\beta>0$ specifies the desired robustness margin and $\gamma>0$ adjusts the trade-off in penalizing samples that already satisfy the margin.

Using the loss $\ell(\phi, \mv X, Y)$, for any given point distribution $P_{\scalebox{0.7}{\mv X}, Y}$, the goal is to find an STL rule $\phi$ that minimizes the population loss
\begin{align}
    L_p(\phi)=\mathbbm{E}_{(\scalebox{0.7}{\mv X}, Y) \sim P_{\scalebox{0.6}{\mv X}, Y}}[\ell(\mv X, Y)], \label{ploss}
\end{align}
where the average is evaluated over the joint distribution $P_{\scalebox{0.7}{\mv X}, Y}$ of input $\mv X$ and target $Y$. The expected value in \eqref{ploss} is practically approximated using the  dataset $\mathcal{D}$, yielding the training loss
\begin{align}
    L_{\mathcal{D}}(\phi) = \frac{1}{|\mathcal{D}|}\sum_{i=1}^{|\mathcal{D}|} \ell(\mv X_i, Y_i).
    \label{tloss}
\end{align}
However, a direct optimization of the training loss $L_{\mathcal{D}}(\phi)$ over the STL formula $\phi$ is made difficult by the fact that the space of STL formulas is discrete.

\subsection{Addressing the STL Learning Problem} \label{STLL}
In order to address the STLL problem of minimizing the training loss \eqref{tloss} over the STL formula $\phi$, reference \cite{li2024tlinet} introduced a relaxation-based method. Specifically, the STLL approach in \cite{li2024tlinet} relaxes the discrete variables dictating the sequence of temporal operators and logical operations applied in the construction rule \eqref{stl} to obtain the STL formula $\phi$.  To this end, reference \cite{li2024tlinet} starts by assuming a template for the STL formula $\phi$ as a sequence of steps in the construction \eqref{stl} of an STL rule. An example, adapted from \cite{li2024tlinet}, is shown in Fig. \ref{TLI}, in which the formula $\phi$ is constructed via the following steps: 1) a number of predicates $\mu$, as in  \eqref{mu}, are selected; 2) the predicates are combined via Boolean operators, which may be either $\land$ or $\lor$; 3) temporal operators, namely $\Diamond_{[t_1,t_2]}$ or $\Box_{[t_1,t_2]}$, are applied to the resulting partial STL formulas; and 4) another Boolean operator, either $\land$ or $\lor$, is used to combine the partial STL properties constructed at the previous steps.

\begin{figure}[t!]
    \centering
    \includegraphics[width=0.7\linewidth]{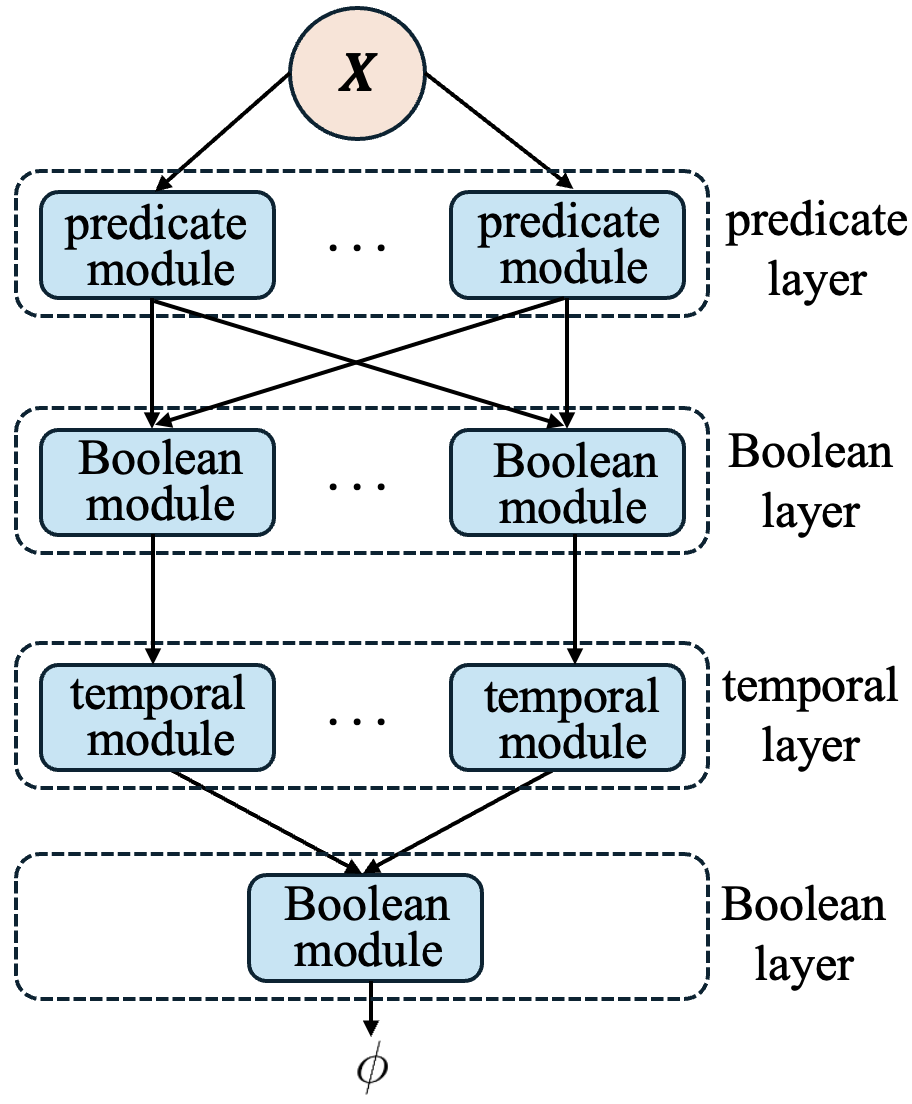}
    \caption{An illustration of a parameterized template for the construction of an STL formula $\phi$. In this example, the STL formula $\phi$ is constructed by the following steps: 1) a number of predicates $\mu$, as in  \eqref{mu}, are selected; 2) the predicates are combined via Boolean operators, either $\land$ or $\lor$; 3) temporal operators $\Diamond_{[t_1,t_2]}$ or $\Box_{[t_1,t_2]}$ are applied to the resulting partial STL formulas; and 4) another Boolean operator is used to combine the partial STL properties constructed at the previous steps.  Each layer is specified by parameters that are optimized by addressing the STLL problem. }
    \label{TLI}
\end{figure}

Accordingly, as illustrated in Fig. \ref{TLI}, a general template contains predicate layer, Boolean layers, and temporal layers. Therefore, templates such as that in Fig. \ref{TLI} fully specify an STL property $\phi$ once a number of parameters are given, namely: \emph{(i)} the variables $\mv a$ and $b$ defining the atomic predicate of a predicate module; \emph{(ii)} the binary variables determining which logical operator (AND or OR) is applied and which formulas constructed from previous steps are combined in a temporal module; \emph{(iii)} the binary variables determining whether a temporal operator is “always’’ $\Box$ or “eventually’’ $\Diamond$, as well as the corresponding interval bounds $t_1$ and $t_2$, in a temporal module.

Minimizing the training loss $L_{\mathcal{D}}(\phi)$ over an STL formula $\phi$ following a given template thus requires optimizing over both continuous variables -- namely $\mv a, b, t_1, t_2$ -- and discrete variables -- i.e., the mentioned binary indicators. Reference \cite{li2024tlinet} proposes relaxing the binary variables to probabilities so as to obtain a differentiable training loss, while adding regularizers to push optimized probabilities towards either 0 or 1.  For completeness, more details on the STLL procedure introduced in \cite{li2024tlinet} can be found in Appendix A.

\section{Conformal STL Learning} \label{sec:cstll}
As discussed in the previous section, given a dataset $\mathcal{D}$, STLL methods obtain an STL formula $\phi$ that ideally distinguishes positive and negative sequences with minimal average error. However, being based on standard supervised learning methodologies, such techniques do not provide any formal, finite-sample, guarantee on the actual average loss obtained by the optimized formula $\phi$. This can make the deployment of these methods in sensitive domains potentially problematic. In this section, we introduce \emph{conformal STL learning} (C-STLL), a calibration methodology building on \emph{sequential generation} and \emph{multiple hypothesis testing}. 

\subsection{Reliable STL Learning via Sequential Generation} \label{sec:goal}
Via sequential generation, C-STLL constructs a \emph{set} of candidate STL formulas using multiple runs of an STLL algorithm. We specifically adopt the STLL algorithm in \cite{li2024tlinet}, which was reviewed in Section \ref{STLL}.  As illustrated in Fig. \ref{model}, C-STLL calibrates an acceptance rule determining whether a formula generated via STLL is accepted for inclusion in the set, together with a stopping rule dictating when to stop generating. C-STLL leverages held-out data, and has the following aims: (\emph{i}) to formally ensure that at least one high-accuracy STL formula is included in the set with sufficiently large probability, and (\emph{ii}) to make a best effort at minimizing the complexity and at maximizing the  diversity of the STLL formulas in the generated set. 

C-STLL follows the general methodology of conformal language modeling \cite{quach2023conformal}, which is itself grounded in Pareto testing \cite{laufer2022efficiently} and learn then test (LTT) \cite{angelopoulos2025learn}. In the following, we first describe the sequential generation of STL formulas via STLL and the accuracy requirement, and then we present C-STLL as a calibration methodology for acceptance and stopping rules.

Formally, given a dataset $\mathcal{D}$, C-STLL constructs a set of STL formulas $\mathcal{C}_{\scalebox{0.7}{\mv \lambda}}(\mathcal{D})$ through multiple runs of STLL \cite{li2024tlinet} via the following steps.

\subsubsection{Sequential Generation} At each learning run $l=1,2,...$, STLL is applied to the dataset $\mathcal{D}$ as described in Sec. \ref{stll}, obtaining a generally different formula $\phi_{l}$. To ensure the nondeterminism
of STLL, and thus the generation of different STL formulas,  we adopt an ensembling methodology based on randomizing initialization and  mini-batch selection for stochastic gradient descent over the trainable parameters, as in deep ensembles \cite{ganaie2022ensemble}. An alternative approach may treat the STL parameters as random variables and sampling from a learned posterior distribution \cite{jospin2022hands}. 

C-STLL includes formula $\phi_l$ in set $\mathcal{C}_{\scalebox{0.7}{\mv \lambda}}(\mathcal{D})$ according to an \emph{inclusion rule} $I_{\scalebox{0.7}{\mv \lambda}}(\mathcal{D}, \phi_1, \ldots, \phi_l)$, to be designed:
\begin{equation}
I_{\scalebox{0.7}{\mv \lambda}}(\mathcal{D}, \phi_1, \ldots, \phi_l)=\begin{cases}
1, & \text{ if } \phi_l \in \mathcal{C}_{\scalebox{0.7}{\mv \lambda}}(\mathcal{D}),\\
0, & \text{ otherwise. } \label{I}
\end{cases}
\end{equation} 
C-STLL stops when a stopping indicator $S_{\scalebox{0.7}{\mv \lambda}}(\mathcal{D}, \phi_1, \ldots, \phi_l)$, to be designed, returns 1, while proceeding to the next run $l+1$ otherwise:
\begin{equation}
S_{\scalebox{0.7}{\mv \lambda}}(\mathcal{D}, \phi_1, \ldots, \phi_l)=\begin{cases}
1, & \text{ if a stopping condition is satisfied},\\
0, & \text{ otherwise. } \label{S}
\end{cases}
\end{equation}
Acceptance rule  $I_{\scalebox{0.7}{\mv \lambda}}(\mathcal{D}, \phi_1, \ldots, \phi_l)$ and stopping rule $S_{\scalebox{0.7}{\mv \lambda}}(\mathcal{D}, \phi_1, \ldots, \phi_l)$ generally depend on a vector of hyperparameters $\mv \lambda$, as introduced in the next subsection. 

\subsubsection{Reliability Requirements} \label{require}
C-STLL designs the hyperparameter $\mv \lambda$ of the acceptance and stopping rule with the aim of ensuring a reliability requirement for the set $\mathcal{C}_{\scalebox{0.7}{\mv \lambda}}(\mathcal{D})$, while also accounting for complexity and diversity of the formulas in set $\mathcal{C}_{\scalebox{0.7}{\mv \lambda}}(\mathcal{D})$. The reliability condition is formalized as follows. Given an STL formula $\phi$, any new trace $\mv X$ can be classified as positive $(\hat{Y}=1)$ or negative $(\hat{Y}=-1)$ based on \eqref{pred0}, or equivalently \eqref{pred}. Given a validation dataset $\mathcal{D}^{\rm v}$, we wish to impose that the fraction of correctly classified samples is larger than a user-defined threshold $\varphi$, i.e.,
\begin{align}
    \frac{1}{|\mathcal{D}^{\rm v}|} \sum_{i=1}^{|\mathcal{D}^{\rm v}|}\mathbbm{1}(\hat{Y}_i=Y_i)>\varphi, \label{acc}
\end{align}
where $\mathbbm{1}(\cdot)$ is the indicator function ($\mathbbm{1}(\text{true})=1$ and $\mathbbm{1}(\text{false})=0$). Accordingly, we
define the accuracy indicator for an STL formula $\phi$ as
\begin{equation}
A(\phi, \mathcal{D}^{\rm v}) = 
 \begin{cases}1,&\text{if}~ \frac{1}{|\mathcal{D}^{\rm v}|} \sum_{i=1}^{|\mathcal{D}^{\rm v}|}\mathbbm{1}(\hat{Y}_i=Y_i)>\varphi,
 \\0,&\text{otherwise,}  \label{admission}
 \end{cases}
\end{equation}
so that $A(\phi, \mathcal{D}^{\rm v})=1$ indicates that the STL formula $\phi$ is sufficiently accurate, while $A(\phi, \mathcal{D}^{\rm v})=0$ signals the unreliability of formula $\phi$.

The joint distribution $P_{\mathcal{D}\mathcal{D}^{\rm v}}$ in practice captures the variability of the relationship between KPI traces $\mv X$ and high-level labels $Y$ across the tasks of interest. To formalize this, we introduce an indicator $\tau$ for a task of interest, corresponding, e.g., to some users, network conditions, and QoE requirements. Each pair $(\mathcal{D}, \mathcal{D}^{\rm v})$ of datasets pertains to a particular task $\tau$. We model the task $\tau \sim P_{\tau}$ as a random variable with an unknown distribution $P_{\tau}$, while the  datasets $\mathcal{D}=\{(\mv X_i, Y_i)\}_{i=1}^{|\mathcal{D}|}$ and $\mathcal{D}^{\rm v}=\{(\mv X_i, Y_i)\}_{i=1}^{|\mathcal{D}^{\rm v}|}$ contain i.i.d. data from an unknown task-specific distribution $P_{\scalebox{0.7}{\mv X}, Y| \tau}$. Accordingly, the joint distribution of the dataset pair $(\mathcal{D}, \mathcal{D}^{\rm v})$ is given by the mixture 
\begin{align}
    P_{\mathcal{D}\mathcal{D}^{\rm v}}=\int P_{\tau} P_{\mathcal{D}| \tau} P_{\mathcal{D}^{\rm v}|\tau} d\tau,
\end{align}
where the task indicator $\tau$ is marginalized over $P_\tau$, and $P_{\mathcal{D}| \tau}$ and $P_{\mathcal{D}^{\rm v}|\tau}$ represent the i.i.d. distributions obtained from distribution $P_{\scalebox{0.7}{\mv X}, Y| \tau}$.

Using this accuracy function, we define the risk associated with the choice of hyperparameter $\mv \lambda$ as the probability that there is no sufficiently accurate STL formula $\phi$ in set $\mathcal{C}_{\scalebox{0.7}{\mv \lambda}}(\mathcal{D})$, i.e.,
\begin{align}
    R(\mv \lambda)=  \Pr\{\nexists \phi \in \mathcal{C}_{\scalebox{0.7}{\mv \lambda}}(\mathcal{D}): A(\phi,\mathcal{D}^{\rm v})=1\}. \label{risk}
\end{align}
In interpreting the probability in \eqref{risk}, we view the dataset pair $(\mathcal{D}, \mathcal{D}^{\rm v})$ as random, following some unknown joint distribution $P_{\mathcal{D}\mathcal{D}^{\rm v}}$.  Our design goal is to identify a hyperparameter $\mv \lambda^*$, which  guarantees that the risk is controlled with high probability, i.e.,
\begin{align}
    \Pr(R(\mv \lambda^*) < \epsilon  ) \geq 1-\delta, \label{goal}
\end{align}
where the probability is over any randomness associated with the optimization of the hyperparameters $\mv \lambda^*$. By \eqref{goal}, the risk of the generated set $\mathcal{C}_{\scalebox{0.7}{\mv \lambda}}(\mathcal{D})$ must be smaller than $\epsilon$ with probability larger than $1-\delta$.

\subsection{Conformal STL Learning} \label{accept}
As discussed, C-STLL builds the set $\mathcal{C}_{\scalebox{0.7}{\mv \lambda}}(\mathcal{D})$ sequentially, and we define henceforth as $\mathcal{C}_{\scalebox{0.7}{\mv \lambda},l}(\mathcal{D})$  the current set at iteration $l$, with the initial set $\mathcal{C}_{\scalebox{0.7}{\mv \lambda},0}(\mathcal{D})$ being empty.

\noindent \textbf{Acceptance rule:} At each run $l$, C-STLL accepts an STL formula $\phi_l$ into the set $\mathcal{C}_{\scalebox{0.7}{\mv \lambda},l}(\mathcal{D})$, setting $I_{\scalebox{0.7}{\mv \lambda}}(\mathcal{D}, \phi_1, \ldots, \phi_l)=1$, if the formula is of sufficiently low complexity and sufficiently diverse relative to the previously accepted formulas. To elaborate, let $H(\phi_{l})$ denote a function that maps the learned STL $\phi_{l}$ to a scalar quantity measuring its complexity. A less complex formula is generally more interpretable, and thus more useful to relate KPI traces to high-level requirements. We specifically adopt the number of operations as the measure of complexity $H(\phi_l)$ \cite{bombara2016decision}. The candidate formula $\phi_{l}$ is deemed of sufficiently low complexity if the measure $H(\phi_{l})$ is smaller than a threshold  $\lambda_1$, i.e.,
\begin{align}
    H(\phi_{l})<\lambda_1. \label{lambda1}
\end{align}

Quality alone, however, is insufficient for constructing an informative set of formulas, since the learned formulas may be structurally similar. To avoid admitting redundant formulas, each candidate is additionally compared with those already in set $\mathcal{C}_{\scalebox{0.7}{\mv \lambda},l}(\mathcal{D})$ by using a distance measure, which is designed to detect duplicates and promote structural diversity. A new candidate $\phi_l$ is deemed to be sufficiently diverse from previous included formulas if the distance $D(\phi_{l}, \phi)$ between $\phi_{l}$ and all of the previously accepted formula $\phi\in\mathcal{C}_{\scalebox{0.7}{\mv \lambda}, l-1}(\mathcal{D})$ is larger than a threshold $\lambda_2$, i.e.,
\begin{align}
    D(\phi_{l}, \phi) > \lambda_2, ~\text{for all}~\phi\in\mathcal{C}_{\scalebox{0.7}{\mv \lambda}, l-1}(\mathcal{D}). \label{lambda2}
\end{align}
As distance function, we follow \cite{madsen2018metrics} and measure semantic dissimilarity by comparing the robustness scores of the two formulas on the dataset $\mathcal{D}$ as
\begin{align}
    D(\phi_{l}, \phi)= \sigma \bigg(\frac{1}{|\mathcal{D}|} \sum_{i=1}^{|\mathcal{D}|} ||\rho(\mv X_i, \phi_l,0)|-|\rho(\mv X_i,\phi,0)| | \bigg). \label{distance}
\end{align}
The use of the sigmoid function $\sigma(\cdot)=(1+e^{-(\cdot)})^{-1}$ in \eqref{distance} ensures that the distance is normalized between 0 and 1.
Two STL formulas that behave similarly across traces in dataset $\mathcal{D}$ will produce nearly identical robustness values, leading to a small distance \eqref{distance}.

When a candidate satisfies both the complexity and diversity criteria, it is accepted into the set $\mathcal{C}_{\scalebox{0.7}{\mv \lambda},l}(\mathcal{D})$, i.e., the acceptance function is given by 
\begin{align}
I_{\scalebox{0.7}{\mv \lambda}}(\mathcal{D}, \phi_1, \ldots, \phi_l)=\mathbbm{1}&~(H(\phi_{l})<\lambda_1 ~\text{and}~D(\phi_{l}, \phi) > \lambda_2, ~\text{for}~ \notag \\ 
&~\text{all}~\phi\in\mathcal{C}_{\scalebox{0.7}{\mv \lambda}, l-1}(\mathcal{D})).
\end{align} 
Accordingly, the set is updated as
\begin{equation}
\mathcal{C}_{\scalebox{0.7}{\mv \lambda},l}(\mathcal{D}) = 
 \begin{cases}\mathcal{C}_{\scalebox{0.7}{\mv \lambda},l-1}(\mathcal{D}) \cup \{\phi_l\},&\text{if}~ I_{\scalebox{0.7}{\mv \lambda}}(\mathcal{D}, \phi_1, \ldots, \phi_l)=1,
 \\\mathcal{C}_{\scalebox{0.7}{\mv \lambda},l-1}(\mathcal{D}),&\text{otherwise.} 
 \end{cases}
\end{equation}

\noindent \textbf{Stopping Rule:} The sequential generation process is run for at most $L_{\rm max}$ times, and may terminate early if the evolving set is assessed to be of high enough quality. To formalize the resulting stopping function $S_{\scalebox{0.7}{\mv \lambda}}(\mathcal{D}, \phi_1, \ldots, \phi_l)$ in \eqref{S}, we define a set-based quality function $F(\mathcal{C}_{\scalebox{0.7}{\mv \lambda},l}(\mathcal{D}))$ for the set $\mathcal{C}_{\scalebox{0.7}{\mv \lambda},l}(\mathcal{D})$. Specifically, we define the quality of any STL formula $\phi_l$ as an increasing function of its average robustness on the dataset $\mathcal{D}$ as
\begin{align}
    Q(\phi_{l})=\sigma\bigg(\frac{1}{|\mathcal{D}|}\sum_{i=1}^{|\mathcal{D}|} \rho(\mv X_i, \phi_{l}, 0)\bigg), \label{qscore}
\end{align}
where the sigmoid function again ensures a normalized measure in the interval $[0,1]$. The set-based quality function $F(\mathcal{C}_{\scalebox{0.7}{\mv \lambda},l}(\mathcal{D}))$ is then defined as the average 
\begin{align}
    F(\mathcal{C}_{\scalebox{0.7}{\mv \lambda},l}(\mathcal{D})) = \frac{1}{|\mathcal{C}_{\scalebox{0.7}{\mv \lambda},l}(\mathcal{D})|} \sum_{\phi \in\mathcal{C}_{\scalebox{0.6}{\mv \lambda},l}(\mathcal{D}) } Q(\phi).
\end{align}

C-STLL stops generating STL formulas at the earliest iteration $l$ for which we have the inequality
\begin{align}
    F(\mathcal{C}_{\scalebox{0.7}{\mv \lambda},l}(\mathcal{D})) > \lambda_3, \label{stop} 
\end{align}
where $\lambda_3$ is a threshold. The overall C-STLL procedure is summarized in Algorithm \ref{rca}. The vector of hyperparameters $\mv \lambda=(\lambda_1, \lambda_2, \lambda_3)$, which encompass the complexity threshold $\lambda_1$ \eqref{lambda1}, the diversity threshold $\lambda_2$ \eqref{lambda2}, and the early-stopping threshold $\lambda_3$ \eqref{stop} are optimized during a preliminary calibration phase, which is discussed next.

\begin{algorithm}[t]
  \caption{Conformal STL Learning (C-STLL) --- test time}\label{rca}
  \begin{algorithmic}[1]
    \STATE {\textbf{Input:} Labeled dataset $\mathcal{D}$} \\
    \STATE \textbf{Initialization:} The initial set is empty $\mathcal{C}_{\scalebox{0.7}{\mv \lambda},0}(\mathcal{D}) = \{\}$ \\
    \FOR{$l=1, 2,\ldots, L_{\rm max}$}
        \STATE Run STLL \cite{li2024tlinet} on dataset $\mathcal{D}$ to learn an STL formula $\phi_l$ \\
        \STATE \textbf{if}~ $H(\phi_l)<\lambda_1$~ \text{and}~ $D(\phi_{l}, \phi) > \lambda_2 ~\text{for all}~\phi\in\mathcal{C}_{\scalebox{0.7}{\mv \lambda}, l-1}(\mathcal{D})$
        \STATE ~~~~Add $\phi_{l}$ to the set $\mathcal{C}_{\scalebox{0.7}{\mv \lambda},l}(\mathcal{D})$
        \STATE \textbf{if}~ $F(\mathcal{C}_{\scalebox{0.7}{\mv \lambda},l}(\mathcal{D})) > \lambda_3$
        \STATE~~~~ Exit the for loop
        \STATE \textbf{end if}
    \ENDFOR
    \STATE {\textbf{Output:} Set $\mathcal{C}_{\scalebox{0.7}{\mv \lambda}}(\mathcal{D})=\mathcal{C}_{\scalebox{0.7}{\mv \lambda},l}(\mathcal{D})$ }
  \end{algorithmic}
\end{algorithm}

\subsection{Reliable Hyperparameter Optimization via Multiple Hypothesis Testing } \label{LTT}
As discussed in Sec. \ref{sec:goal}, C-STLL aims at identifying hyperparameters $\mv \lambda^*$ that satisfy the reliability requirement \eqref{goal}, while attempting to  introduce STL formulas with low complexity and to maintain diversity. To this end, C-STLL carries out an offline calibration procedure based on calibration data encompassing examples of dataset pairs $(\mathcal{D}, \mathcal{D}^{\rm v})$ from different tasks. Following the setting described in Section \ref{require}, we assume the availability of a calibration dataset $\mathcal{D}^{\rm cal}=\{(\mathcal{D}_k, \mathcal{D}_k^{\rm v})\}_{k=1}^{|\mathcal{D}^{\rm cal}|}$  consisting of pairs $(\mathcal{D}_k, \mathcal{D}_k^{\rm v})$ of training and validation datasets drawn i.i.d. from the dataset distribution $P_{\mathcal{D}\mathcal{D}^{\rm v}}$. Accordingly, each pair $(\mathcal{D}_k, \mathcal{D}_k^{\rm v})$ corresponds to an independently generated task $\tau_k \sim P_{\tau}$. C-STLL uses the calibration dataset $\mathcal{D}^{\rm cal}$ to select hyperparameter $\mv \lambda^*$ so that the condition \eqref{goal} is satisfied, where the outer probability is taken over the calibration dataset $\mathcal{D}^{\rm cal}$.

To this end, C-STLL follows the LTT methodology \cite{angelopoulos2025learn} by specifically adopting Pareto testing \cite{laufer2022efficiently}. In LTT, one starts by identifying a discrete subset $\Lambda$ of configurations for the hyperparameter $\mv \lambda$. This can be done via a standard grid or through any optimization algorithm using prior information or separate data. For each hyperparameter $\mv \lambda \in \Lambda$, LTT tests the null hypothesis $\mathcal{H}_{\scalebox{0.7}{\mv \lambda}}$ that the true risk \eqref{risk} is above the target uncertainty level $\epsilon$, i.e., 
\begin{align}
    \mathcal{H}_{\scalebox{0.7}{\mv \lambda}}: R(\mv \lambda) \geq \epsilon. \label{hypo}
\end{align}
Rejecting the null hypothesis $\mathcal{H}_{\scalebox{0.7}{\mv \lambda}}$ identifies hyperparameter $\mv \lambda$ yielding reliable performance as measured by the accuracy indicator \eqref{admission}.

Using the calibration data $\mathcal{D}^{\rm cal}$, for each hyperparameter $\mv \lambda \in \Lambda$, we evaluate the empirical risk $\hat{R}(\mv \lambda, \mathcal{D}^{\rm cal})$ as
\begin{align}
    \hat{R}(\mv \lambda, \mathcal{D}^{\rm cal})= \frac{1}{|\mathcal{D}^{\rm cal}|}\sum_{k=1}^{|\mathcal{D}^{\rm cal}|}  \mathbbm{1}\{\nexists \phi \in \mathcal{C}_{\scalebox{0.7}{\mv \lambda}}(\mathcal{D}_k): A(\phi,\mathcal{D}^{\rm v}_k)=1\}, \label{estimate}
\end{align}
which corresponds to an unbiased estimate of the true risk $R(\mv \lambda)$ in \eqref{risk}. Accordingly, if the null hypothesis $\mathcal{H}_{\scalebox{0.7}{\mv \lambda}}$ holds, the estimate \eqref{estimate} will tend to be larger than $\epsilon$. More precisely, let $b(|\mathcal{D}^{\rm cal}|, \epsilon)$ denote a binomial random variable with sample size $|\mathcal{D}^{\rm cal}|$ and success probability $\epsilon$. A valid p-value for the hypothesis $\mathcal{H}_{\scalebox{0.7}{\mv \lambda}}$ in \eqref{hypo} is given by \cite{quach2023conformal}
\begin{align}
    p(\mv \lambda)=\Pr(b(|\mathcal{D}^{\rm cal}|, \epsilon) \leq |\mathcal{D}^{\rm cal}|\hat{R}(\mv \lambda, \mathcal{D}^{\rm cal})), \label{pv}
\end{align}
in the sense that we have the inequality $\Pr(p(\mv \lambda) \leq \delta |\mathcal{H}_{\scalebox{0.7}{\mv \lambda}}) \leq \delta$ for all probability $\delta$.

Using the p-values $\{p(\mv \lambda)\}_{\scalebox{0.7}{\mv \lambda}\in \Lambda}$, LTT applies a family-wise error rate (FWER) multiple hypothesis testing (MHT) procedure to obtain a subset $\Lambda_{\rm valid} \subseteq \Lambda$ with the property 
\begin{align}
    \Pr\!\left( \exists\, \mv \lambda \in \Lambda_{\mathrm{valid}} :
R(\mv \lambda) \ge \epsilon \right) \le \delta. \label{FWER}
\end{align}
If the set $\Lambda_{\rm valid}$ is empty, we select the solution $\mv \lambda^*=[\lambda_1=\infty, \lambda_2= 0, \lambda_3=\infty]^\top$, which allows the set  $\mathcal{C}_{\scalebox{0.7}{\mv \lambda}}(\mathcal{D})$ to include all $L_{\rm max}$ STL formulas.   Otherwise, using any configuration in subset $\Lambda_{\rm valid}$ guarantees the reliability condition \eqref{goal} by the FWER property \eqref{FWER}. In C-STLL, we select the hyperparameter $\mv \lambda^* \in \Lambda_{\rm valid}$ that minimizes the empirical set size, i.e.,
\begin{align}
    \mv \lambda^*= \argmin_{\scalebox{0.7}{\mv \lambda}\in\Lambda_{\rm valid}} \frac{1}{|\mathcal{D}^{\rm cal}|}\sum_{k=1}^{|\mathcal{D}^{\rm cal}|} |\mathcal{C}_{\scalebox{0.7}{\mv \lambda}}(\mathcal{D}_k)|. \label{best}
\end{align}

As for the FWER procedure, in order to efficiently search the set of candidates $\Lambda$, we use the Pareto Testing procedure from \cite{laufer2022efficiently}. Pareto Testing exploits structure in $\Lambda$ by first using a proportion of the dataset $\mathcal{D}^{\rm cal}$ to approximate the Pareto-optimal frontier in two-dimensional space, and by then iteratively validating promising configurations
using fixed-sequence testing \cite{wiens2003fixed} on the remaining calibration data.  Details can be found in Appendix B.

\begin{algorithm}[t] \label{a2}
  \caption{Conformal STL Learning (C-STLL) --- calibration phase}\label{alg:cstll_calibration}
  \begin{algorithmic}[1]
    \STATE {\textbf{Input:} Calibration dataset $\mathcal{D}^{\rm cal}$, set $\Lambda$, risk tolerance $\epsilon$, error level $\delta$} \\
    \STATE \textbf{Initialization:} Initialize the reliable set as empty $\Lambda_{\rm valid}=\{\}$ \\

    \FOR{$\mv{\lambda}\in\Lambda$}
        \STATE Evaluate the empirical risk $\hat{R}(\mv{\lambda},\mathcal{D}^{\rm cal})$ using \eqref{estimate} \\
        \STATE Compute the p-value $p(\mv{\lambda})$ for hypothesis $\mathcal{H}_{\scalebox{0.7}{\mv \lambda}}$ using \eqref{pv}
    \ENDFOR

    \STATE Apply LTT procedure on $\{p(\mv \lambda)\}_{\scalebox{0.7}{\mv \lambda}\in \Lambda}$ to obtain a subset $\Lambda_{\rm valid}\subseteq\Lambda$ satisfying the FWER guarantee \eqref{FWER} (see Appendix B)\\

    \STATE \textbf{if}~ $\Lambda_{\rm valid}$ is empty
        \STATE ~~~~Set $\mv{\lambda}^*=[\lambda_1=\infty,\lambda_2=0,\lambda_3=\infty]^\top$
    \STATE \textbf{else}
        \STATE ~~~~Select $\mv{\lambda}^*\in\Lambda_{\rm valid}$ that minimizes the empirical average set size as in \eqref{best}
    \STATE \textbf{end if}

    \STATE {\textbf{Output:} Calibrated hyperparameter $\mv{\lambda}^*$}
  \end{algorithmic}
\end{algorithm}

\subsection{Theoretical Guarantees}
Given its reliance on LTT, C-STLL guarantees the target reliability condition \eqref{goal}.
\begin{theorem}[\textbf{Reliability of C-STLL via LTT}]
\label{theor}
By setting the hyperparameter vector $\mv \lambda^*$ as described in Algorithm \ref{a2}, the risk of the set $\mathcal{C}_{\scalebox{0.7}{\mv \lambda}^*}(\mathcal{D})$ constructed  via C-STLL as in Algorithm \ref{rca} satisfies the inequality \eqref{goal}.
\end{theorem}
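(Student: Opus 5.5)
The argument reduces the reliability condition \eqref{goal} to the FWER property \eqref{FWER} of the LTT procedure. It needs three ingredients: validity of the binomial p-values \eqref{pv} under each null \eqref{hypo}, FWER control of the chosen multiple testing procedure (Pareto testing with fixed-sequence testing), and a separate treatment of the fallback choice when $\Lambda_{\rm valid}$ is empty.

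\textbf{Step 1: validity of each p-value.} Fix $\mv\lambda\in\Lambda$. The calibration pairs $(\mathcal{D}_k,\mathcal{D}_k^{\rm v})$ are i.i.d. from $P_{\mathcal{D}\mathcal{D}^{\rm v}}$. Since the set $\mathcal{C}_{\scalebox{0.7}{\mv\lambda}}(\mathcal{D}_k)$ is built by Algorithm \ref{rca} using only $\mathcal{D}_k$ and independent internal randomness, the loss indicators $Z_k(\mv\lambda)=\mathbbm{1}\{\nexists\phi\in\mathcal{C}_{\scalebox{0.7}{\mv\lambda}}(\mathcal{D}_k):A(\phi,\mathcal{D}_k^{\rm v})=1\}$ are i.i.d. Bernoulli with mean $R(\mv\lambda)$. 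Hence $|\mathcal{D}^{\rm cal}|\hat R(\mv\lambda,\mathcal{D}^{\rm cal})\sim b(|\mathcal{D}^{\rm cal}|,R(\mv\lambda))$. Under $\mathcal{H}_{\scalebox{0.7}{\mv\lambda}}$ this binomial stochastically dominates $b(|\mathcal{D}^{\rm cal}|,\epsilon)$. The function $p(\mv\lambda)$ in \eqref{pv} is the CDF of $b(|\mathcal{D}^{\rm cal}|,\epsilon)$ evaluated at the observed count, so the standard probability-integral-transform argument for discrete CDFs gives $\Pr(p(\mv\lambda)\le\delta\mid\mathcal{H}_{\scalebox{0.7}{\mv\lambda}})\le\delta$ for all $\delta$.

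\textbf{Step 2: FWER of Pareto testing.} Pareto testing splits $\mathcal{D}^{\rm cal}$ into two independent parts. The first part determines an ordering of the configurations in $\Lambda$. Conditional on that first part, the ordering is fixed, and the p-values computed on the second part remain valid by Step 1. Fixed-sequence testing at level $\delta$ rejects hypotheses in order until the first failure. If any true null is rejected, then the first true null in the ordering must be rejected, and this happens with probability at most $\delta$. This yields \eqref{FWER} conditionally on the first part, and taking the expectation gives it unconditionally.

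\textbf{Step 3: conclusion and main obstacle.} On the event $E=\{\nexists\mv\lambda\in\Lambda_{\rm valid}:R(\mv\lambda)\ge\epsilon\}$, which has probability at least $1-\delta$ by Step 2, either of two cases holds.
\begin{enumerate}
\item $\Lambda_{\rm valid}\neq\emptyset$. Here $\mv\lambda^*$ from \eqref{best} lies in $\Lambda_{\rm valid}$, so $R(\mv\lambda^*)<\epsilon$. The data-dependent selection of $\mv\lambda^*$ is harmless, because the bound holds simultaneously for every element of $\Lambda_{\rm valid}$ on $E$.
\item $\Lambda_{\rm valid}=\emptyset$. Here $\mv\lambda^*$ is the fallback configuration, and this case is the main obstacle. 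The risk of the fallback, which returns all $L_{\max}$ formulas, is not controlled by the test, so the clean statement \eqref{goal} does not follow in general.
\end{enumerate}
To handle the fallback I would first try to adopt the LTT convention of an assumption (or an abstention). Specifically, I would assume the fallback configuration satisfies $R(\mv\lambda^*)<\epsilon$, i.e. that the unrestricted ensemble of $L_{\max}$ formulas is reliable. Equivalently, I could include this configuration in $\Lambda$ as the first hypothesis of the fixed sequence, so that it is always covered by the FWER argument. The theorem would then be stated with this caveat.
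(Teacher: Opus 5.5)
Your Steps 1 and 2 are the paper's proof. Appendix C shows validity of the binomial p-value exactly as you do: $|\mathcal{D}^{\rm cal}|\hat R(\mv{\lambda},\mathcal{D}^{\rm cal})\sim b(|\mathcal{D}^{\rm cal}|,R(\mv{\lambda}))$, and the binomial CDF is nonincreasing in the success probability. It then cites FWER control of fixed-sequence testing \cite{wiens2003fixed}. Your conditioning on the Pareto-testing split and your ``first true null'' argument make explicit what the paper takes for granted.

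Your worry about the fallback is well founded, and it is a hole in the paper's argument, not in yours. Appendix C passes from ``no true null is rejected'' directly to ``the selected $\mv{\lambda}^*$ satisfies \eqref{goal}''. That step covers only the case $\Lambda_{\rm valid}\neq\emptyset$, and the main text itself claims the guarantee only in that case. Without an extra hypothesis the theorem is false. Take balanced labels independent of the traces: every learned formula has validation accuracy near $1/2<\varphi$. Then every configuration, the fallback included, has risk close to $1$, hence at least $\epsilon$, and $\Pr(R(\mv{\lambda}^*)<\epsilon)=0$.

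Your first remedy repairs the statement. If you assume the fallback has risk below $\epsilon$, then on the FWER event either $\mv{\lambda}^*\in\Lambda_{\rm valid}$ or $\mv{\lambda}^*$ is the fallback, and both have risk below $\epsilon$.

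Your ``equivalent'' remedy does not work. FWER control certifies only \emph{rejected} hypotheses. If the fallback is placed first and is not rejected, testing stops with $\Lambda_{\rm valid}=\emptyset$, and Algorithm~\ref{alg:cstll_calibration} still outputs the uncertified fallback. The valid fixes are:
\begin{enumerate}
\item the assumption on the fallback's risk;
\item replacing the fallback by an explicit abstention;
\item weakening the claim to $\Pr(\Lambda_{\rm valid}=\emptyset\ \text{or}\ R(\mv{\lambda}^*)<\epsilon)\ge 1-\delta$.
\end{enumerate}
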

\begin{proof}
    The proof is provided for completeness in the Appendix C.
\end{proof}

\section{Experiments} \label{sec:experiments}
In this section, we provide numerical results to demonstrate the effectiveness of the proposed C-STLL calibration methodology.

\subsection{Setting}
Throughout this section, we adopt the network simulator ns-3\footnote{The simulator is available at https://www.nsnam.org/.} to model a single gaming user communicating with a remote host over a wireless network (see Fig. \ref{gaming}). Bidirectional user datagram protocol (UDP) traffic is generated to emulate interactive gaming, while optional background users introduce network congestion. The \emph{latency} KPI is defined as the one-way end-to-end delay of successfully received gaming packets, averaged over uplink and downlink. The \emph{backlog} KPI represents the number of user-side queued UDP packets awaiting transmission. Each trace $\mv X$ consists of latency and backlog KPIs collected over 61 time steps, which corresponds to uniformly spaced observations over a real time observation of 13s. 

We consider two different labeling strategies, with the first relying on a ground-truth STL formula, and the second mimicking a high-level QoE assessment. The first approach is included for reference, as it makes it possible to compare true and estimated STL formulas directly.

In the first approach, a trace is labeled as positive $(Y=1)$ if the latency remains below a threshold $T_1$ over all time steps, while the backlog is below a threshold $T_2$ during the final 5 time steps; otherwise, it is labeled as negative $(Y=-1)$. This labeling approach has the advantage of providing a ground-truth STL formula, namely 
\begin{align}
    \phi^{\rm true} &~= \Box_{[0,60]}(\text{latency}<T_1~ \text{ms}) \notag \\
    &~\land \Box_{[5 6,60]}(\text{backlog}<T_2~ \text{kilobyte}), \label{true}
\end{align}
facilitating evaluation. We specifically use the threshold pairs 
$(T_1, T_2) = (100, 30), (110, 28), (120, 26),  (130, 24)$, and $(140, 32)$ 
to label the entire dataset, corresponding to five distinct tasks. 

We also consider a labeling setting mimicking the assessment of QoE levels from the end user (see Fig. \ref{gaming}). To this end, we adopt an LLM-as-a-judge approach \cite{li2024llms, gu2024survey}, whereby an LLM (ChatGPT5.1) is used to assign QoE labels based on the temporal evolution of latency and backlog. This is done by using the following prompt: \emph{“You are a gaming user communicating with a remote host over a wireless network. The attached file contains KPIs including latency and backlog over 61 time steps. Evaluate the QoE according to one of the following five QoE definitions: (i) latency-dominant QoE, prioritizing consistently low latency; (ii) backlog-dominant QoE, prioritizing low queue backlog; (iii) balanced QoE, requiring both low latency and low backlog; (iv) burst-tolerant QoE, allowing short-term spikes in latency or backlog but penalizing sustained degradation; and (v) strict QoE, penalizing any persistent latency or backlog increase.  Label each example with $Y=1$ corresponding to high QoE and $Y=-1$ to low QoE.”}

Each training dataset $\mathcal{D}$ contains 5,000 examples, and each validation set $\mathcal{D}^{\rm v}$ containing 1,000 examples. Given five tasks, each dataset pair $(\mathcal{D}, \mathcal{D}^{\rm v})$ is generated by first uniformly sampling a task $\tau$, and then sampling from the labeled data
corresponding to this task. We generate 100 calibration data pairs $(\mathcal{D}, \mathcal{D}^{\rm v})$, with 50 pairs used for Pareto testing, and the remaining 50 used for fixed-sequence testing (see Section \ref{LTT}). 

\begin{table*}[t!] 
\centering
\caption{Examples of learned STL formula sets with conventional STLL \cite{li2024tlinet} and with variants of the proposed C-STLL calibration schemes given the same training dataset $\mathcal{D}$ (true labeling STL formula \eqref{true} with $T_1=100$ and $T_2=30$). }
\label{tab:learned_stl_sets}
\renewcommand{\arraystretch}{1.2}
\setlength{\tabcolsep}{8pt}

\begin{tabularx}{\textwidth}{p{0.12\textwidth}|X|p{0.09\textwidth}}
\hline
\textbf{Scheme} &
\textbf{Learned STL Formula Set $\mathcal{C}_{\scalebox{0.7}{\mv \lambda}^*}(\mathcal{D})$} &
\textbf{Accuracy} \\
\hline

\textbf{STLL} &
\makecell[tl]{%
\(\phi=\ 
\begin{aligned}[t]
& \Box_{[24,39]}(\text{latency}<21)\ \wedge\ 
  \Box{[1,57]}(\text{backlog}<59)\ \wedge\ 
  \Box_{[29,30]}(\text{latency}<66)\ \wedge\\
& \Box_{[11,42]}(\text{backlog}>3)
\end{aligned}
\)
} &
\makecell[tl]{
\(\phi: 65\%\)
} \\
\hline

\textbf{C-STLL} &
\makecell[tl]{
\(\{\)\(\phi_1=\ \Box_{[20,31]}(\text{latency}<137)\ \wedge\ \Box_{[25,60]}(\text{backlog}<55)\),\\
\(\phi_2=\ \Diamond_{[11,40]}(\text{backlog}>3)\ \wedge\ \Box_{[0,55]}(\text{latency}<30)\ \)\(\}\)
} &
\makecell[tl]{
\(\phi_1: \textbf{96.9}\%\),\\
\(\phi_2: \textbf{85\%} \)
} \\
\hline

\textbf{C-STLL with stopping rule only} &
\makecell[tl]{%
\(\{\)\(\phi_1=\ \Box_{[19,49]}(\text{backlog}<64)\),\\
\(\phi_2=\ \Diamond_{[19,33]}(\text{latency}<154)\ \vee\ \Box_{[9,48]}(\text{latency}<22)\),\\
\(\phi_3=\ \Diamond_{[13,35]}(\text{backlog}>60)\ \vee\ \Diamond_{[14,45]}(\text{latency}<20)\),\\
\(\phi_4=\ \Diamond_{[11,60]}(\text{backlog}<4)\ \vee\ \Box_{[13,38]}(\text{backlog}<63)\),\\
\(\phi_5=\ \Box_{[5,29]}(\text{backlog}<81)\),\\
\(\phi_6=\ \Diamond_{[18,32]}(\text{backlog}>16)\ \wedge\ \Diamond_{[24,54]}(\text{backlog}>11)\ \wedge\ \Diamond_{[0,33]}(\text{backlog}<45)\),\\
\(\phi_7=\ \Box_{[15,36]}(\text{latency}<14)\ \wedge\ \Diamond_{[17,53]}(\text{backlog}<24)\),\\
\(\phi_8=\ \Box_{[25,29]}(\text{latency}<27)\),\\
\(\phi_9=\ \Box_{[0,33]}(\text{backlog}>11)\ \vee\ \Box_{[8,46]}(\text{backlog}>21)\),\\
\(\phi_{10}=\ \Box_{[15,38]}(\text{backlog}>2)\ \wedge\ \Box_{[18,29]}(\text{latency}<28)\)\(\}\)
} &
\makecell[tl]{%
\(\phi_1: \textbf{84.1\%}\),\\
\(\phi_2: \textbf{84.1\%}\),\\
\(\phi_3: 52.7\%\),\\
\(\phi_4: 57.5\%\),\\
\(\phi_5: \textbf{84.1\%}\),\\
\(\phi_6: \textbf{81.3\%}\),\\
\(\phi_7: 47.3\%\),\\
\(\phi_8: 76.8\%\),\\
\(\phi_9: 52.7\%\),\\
\(\phi_{10}: 71.9\%\)
} \\
\hline

\textbf{C-STLL with complexity check and stopping rule} &
\makecell[tl]{%
\(\{\)\(\phi_1=\ \Diamond_{[23,45]}(\text{backlog}>2)\ \wedge\ \Box_{[21,39]}(\text{backlog}<75)\),\\
\(\phi_2=\ \Diamond_{[24,39]}(\text{backlog}>8)\ \wedge\ \Box_{[27,46]}(\text{latency}<32)\),\\
\(\phi_3=\ \Diamond_{[18,32]}(\text{latency}<17)\ \wedge\ \Diamond_{[26,52]}(\text{backlog}>4)\),\\
\(\phi_4=\ \Diamond_{[27,33]}(\text{backlog}<45)\),\\
\(\phi_5=\ \Diamond_{[17,47]}(\text{backlog}>12)\ \wedge\ \Diamond_{[1,35]}(\text{backlog}<73)\),\\
\(\phi_6=\ \Diamond_{[1,44]}(\text{latency}<17)\ \wedge\ \Diamond_{[13,52]}(\text{backlog}>13)\),\\
\(\phi_7=\ \Box_{[25,54]}(\text{latency}<20)\ \wedge\ \Diamond_{[5,36]}(\text{backlog}>6)\)\(\}\)
} &
\makecell[tl]{%
\(\phi_1: \textbf{84.1\%} \),\\
\(\phi_2: 78.6\%\),\\
\(\phi_3:53.8\%\),\\
\(\phi_4: \textbf{82.6\%}\),\\
\(\phi_5:52.7\%\),\\
\(\phi_6: 57.7\%\),\\
\(\phi_7:48.1\%\)
} \\
\hline

\textbf{C-STLL with diversity check and stopping rule} &
\makecell[tl]{%
\(\{\)\(\phi_1=\ 
\begin{aligned}[t]
& \Box_{[24,51]}(\text{latency}<32)\ \wedge\ 
  \Diamond_{[9,46]}(\text{latency}>7)\ \wedge\ 
  \Box_{[13,37]}(\text{backlog}>369)\ \wedge\\
& \Box_{[4,36]}(\text{latency}<31)
\end{aligned}
\)\\
\(\phi_2=\ \Box_{[1,53]}(\text{latency}<29)\ \vee\ \Diamond_{[0,58]}(\text{latency}<28)\),\\
\(\phi_3=\ 
\begin{aligned}[t]
& \Diamond_{[11,45]}(\text{latency}<179)\ \wedge\ 
  \Box_{[19,55]}(\text{backlog}>48)\ \wedge\ 
  \Box_{[23,56]}(\text{latency}>10)\ \wedge\\
& \Diamond_{[25,54]}(\text{backlog}>195)\ \wedge\ 
  \Diamond_{[0,52]}(\text{latency}<101)\ \wedge\ 
  \Box_{[9,32]}(\text{backlog}>189) \}
\end{aligned}
\)
} &
\makecell[tl]{%
\(\phi_1: 70.6\% \),\\
\(\phi_2: \textbf{83.9\%}\),\\
\(\phi_3: \textbf{83.9\%}\)
} \\
\hline

\end{tabularx}
\end{table*}

\subsection{Implementation}
We adopt the STLL method in \cite{li2024tlinet}, which learns an STL formula via neural network–based optimization, as explain in Section \ref{STLL} and Appendix A. We follow a template similar to Fig. \ref{TLI}, with six predicate modules at the input layer, followed by six temporal modules and a single Boolean module as the output layer.  Following the recommendations in \cite{li2024tlinet}, the learning rate is set to 0.1, the batch size to 512, and the number of training epochs to 5, and in  the loss function \eqref{loss}, we set $\beta=0.1$ and $\gamma=0.01$. The grid $\Lambda$ contains all threshold tuples $(\lambda_1, \lambda_2, \lambda_3)$ with $\lambda_1 \in \{0.33, 0.5, 0.67\}$, $\lambda_2 \in \{0.50, 0.52, 0.54, 0.56, 0.58\}$, and $\lambda_3\in\{0.3, 0.4, 0.5, 0.6, 0.7\}$. These ranges were chosen based on preliminary experiments to identify regions with meaningful variation in risk and set size over separate data generated in the same way.

For C-STLL, unless otherwise stated, the accuracy threshold in \eqref{admission} is set to $\varphi=0.8$, thus requiring that a fraction larger than 80\% of validation data points is correctly classified. The risk tolerance in \eqref{hypo} is set to $\epsilon=0.2$, so that for the selected hyperparameter $\mv \lambda^*$ the probability of meeting the accuracy requirement is at least 80\%. Finally, the error level in \eqref{goal} is $\delta=0.05$, so that the probability of selecting an unreliable hyperparameter $\mv \lambda^*$ is no more than 0.05.  C-STLL is run for at most $L_{\rm max}=10$ iterations.

\subsection{Benchmarks}
For comparison, we consider the standard application of STLL \cite{li2024tlinet}, along with a number of variants of C-STLL: 

\noindent \textbf{STLL:} This baseline learns a single STL formula by minimizing a robustness-based loss \eqref{loss} over the training dataset $\mathcal{D}$. Unlike C-STLL, STLL does not generate a set of candidate formulas and does not employ  stopping criteria and acceptance rule, and therefore provides no explicit reliability control. It is obtained from Algorithm \ref{rca} by setting $L_{\rm max}=1$.

\noindent   \textbf{C-STLL with stopping rule only:} This variant of the proposed C-STLL scheme uses the stopping rule based on the set-level quality function $F$, but it does not apply the complexity and diversity checks when accepting individual formulas $\phi_l$. Accordingly, in this case, only the stopping threshold $\lambda_3$ is optimized, while thresholds $\lambda_1$ and $\lambda_2$ are not used in Algorithm \ref{rca}.

\noindent \textbf{C-STLL with complexity check and stopping rule:} This variant of C-STLL enforces the complexity-based acceptance rule, admitting a candidate formula $\phi_l$ only if its complexity satisfies \eqref{lambda1}, but no diversity constraint is applied. In this case, only the complexity threshold $\lambda_1$ and the stopping threshold $\lambda_3$ are optimized, while threshold $\lambda_2$ is not used in Algorithm \ref{rca}.

\noindent \textbf{C-STLL with diversity check and stopping rule:} This variant enforces the diversity-based acceptance rule, requiring each accepted formula $\phi_l$ to be sufficiently dissimilar from previously accepted formulas, but no complexity constraint is applied. In this case, only the diversity threshold $\lambda_2$ and the stopping threshold $\lambda_3$ are optimized, while threshold $\lambda_1$ is not used in Algorithm \ref{rca}.

\noindent \textbf{C-STLL with Bonferroni correction:} We also consider a baseline that selects the hyperparameter $\mv \lambda$ using Bonferroni-corrected MHT as the FWER-controlling mechanism. For each $\mv \lambda \in \Lambda$, this scheme computes the p-value in \eqref{pv}, and declares $\mv \lambda$ reliable if $p(\mv \lambda)<\delta/|\Lambda|$. Therefore, the reliable set is given by $\Lambda_{\rm valid}=\{\mv \lambda \in \Lambda |p(\mv \lambda)<\delta/|\Lambda|\}$, and we select the hyperparameter $\mv \lambda^* \in \Lambda_{\rm valid}$ that yields the smallest set size as in \eqref{best}. This benchmark allows us to verify the benefits of Pareto testing \cite{laufer2022efficiently} as a hyperparameter selection strategy. 

We emphasize that all C-STLL variants are novel, and that prior art only considered STLL \cite{li2024tlinet}. 

\begin{figure*}[t!]
    \centering
    \includegraphics[width=0.7\linewidth]{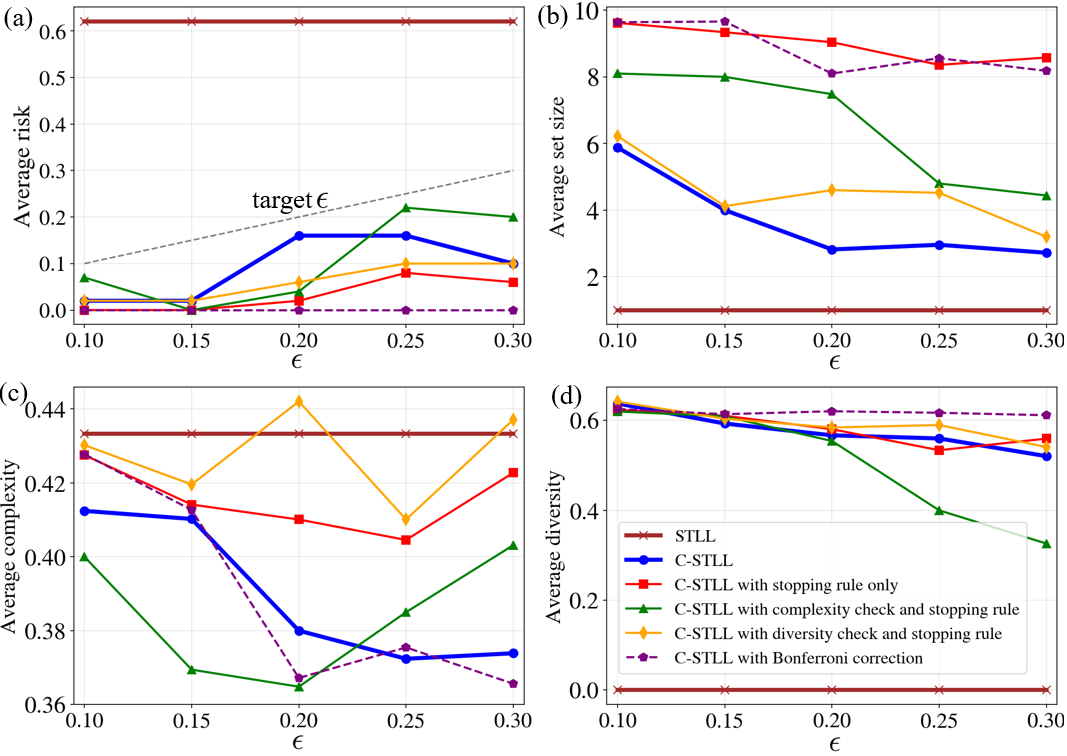}
    \caption{Average test performance as a function of the target risk $\epsilon$ (the probability that the set $\mathcal{C}_{\scalebox{0.7}{\mv \lambda}}(\mathcal{D})$ includes no STL formulas with validation accuracy above $\varphi=0.8$) with data labeled using the ground-truth STL formula \eqref{true}: (a) Average risk. (b) Average set size. (c) Average complexity \eqref{lambda1}. (d) Average diversity \eqref{distance}. }
    \label{result}
\end{figure*}
\subsection{Results}
We start by considering data labeled via the ground-truth STL formula $\phi^{\rm true}$ in \eqref{true}.  To illustrate the operation of different schemes, in Table I, we present examples of STL formula sets learned using the same dataset $\mathcal{D}$ under STLL, which returns a single formula, as well as with the mentioned C-STLL variants. As discussed, the C-STLL variants implement different acceptance and stopping mechanisms, which directly shape the resulting set. Table I also reports the accuracy on the same validation dataset $\mathcal{D}^{\rm v}$ as in \eqref{acc}, i.e., $(\sum_{i=1}^{|\mathcal{D}^{\rm v}|}\mathbbm{1}(\hat{Y}_i=Y_i))/|\mathcal{D}^{\rm v}|$, marking in bold the STL formulas with accuracy above the target threshold $\varphi=0.8$.

STLL returns a single formula, which is quite different from the ground-truth STL formula $\phi^{\rm true}$ in \eqref{true}, yielding an accuracy level below the target 80\%.  In contrast, C-STLL results in a compact set, where each formula is both low in complexity and complementary to the others, resulting in at least one formula in the set with accuracy  above the threshold $\varphi$.

When only the stopping rule is used, C-STLL admits all learned formulas until the aggregate quality criterion is met. As a result, the corresponding set is large, and contains many formulas that are structurally similar or redundant. Introducing a complexity check restricts the admission of formulas with many temporal or Boolean operators, leading to a smaller set composed of simpler and more interpretable formulas. In contrast, enforcing a diversity check suppresses formulas whose robustness behaviors are similar to previously accepted ones, encouraging structural variety but still allowing relatively complex formulas to enter the set. This analysis validates the importance of including complexity and diversity checks.

\begin{figure*}[t!]
    \centering
    \includegraphics[width=0.7\linewidth]{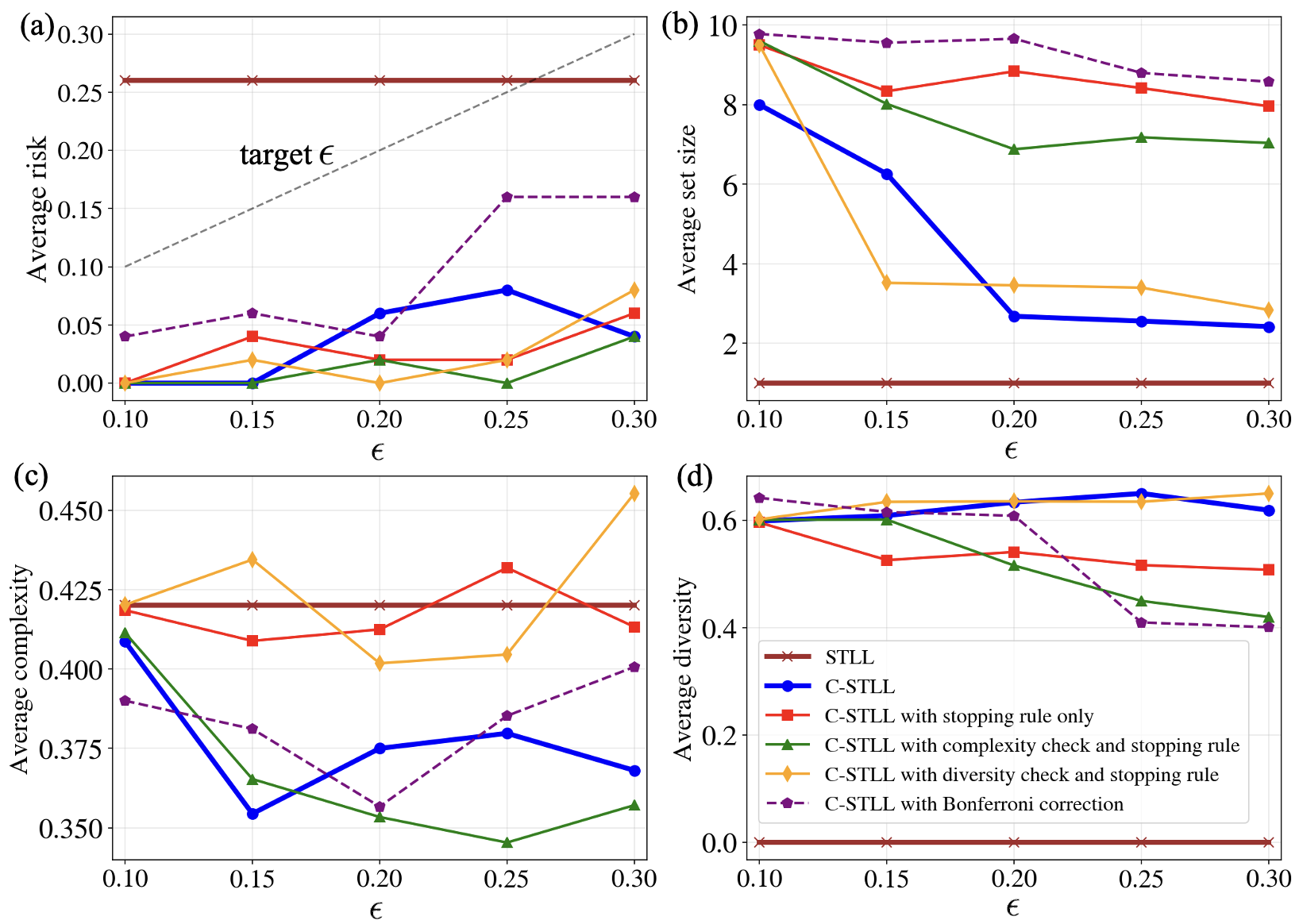}
    \caption{Average test performance as a function of the target risk $\epsilon$ (the probability that the set $\mathcal{C}_{\scalebox{0.7}{\mv \lambda}}(\mathcal{D})$ includes no STL formulas with validation accuracy above $\varphi=0.8$) with data labeled using LLM: (a) Average risk. (b) Average set size. (c) Average complexity \eqref{lambda1}. (d) Average diversity \eqref{distance}. }
    \label{result2}
\end{figure*}

We now evaluate aggregate metrics to study the average performance of C-STLL. To this end, Fig. \ref{result} shows the test performance of STLL and of all C-STLL variants as a function of the risk tolerance $\epsilon$ on a test set. We evaluate four key metrics, including average risk \eqref{risk}, average set size $|\mathcal{C}_{\scalebox{0.7}{\mv \lambda}}(\mathcal{D})|$, average formula complexity \eqref{lambda1}, and average diversity \eqref{distance}. Recall that the risk $R(\mv \lambda)$ measures the probability of the set $\mathcal{C}_{\scalebox{0.7}{\mv \lambda}}(\mathcal{D})$ including no STL formulas above the target reliability $\varphi$. The average complexity is defined as the normalized number of temporal operations in an STL formula (see Section \ref{accept}), while the average diversity is computed as the mean of all pairwise distances between formulas using \eqref{distance}.

Fig. \ref{result}(a) shows the empirical average risk achieved by each scheme on the test dataset. The dashed diagonal indicates the target constraint $\epsilon$. STLL fails to control the risk, while all C-STLL schemes control the risk below the pre-defined tolerance $\epsilon$, validating the effectiveness of the conformal calibration framework. The Bonferroni-based scheme consistently yields the lowest empirical risk, which is expected due to its conservative nature. In contrast, the baseline C-STLL and its variants exhibit a tradeoff between reliability and expressiveness: relaxing the selection criterion allows the model to admit more compact or diverse formula sets at the cost of operating closer to the risk boundary.

Fig. \ref{result}(b) reports the average size of the learned STL formula set. A decreasing trend is observed for most schemes as the target $\epsilon$ increases, indicating that a looser risk tolerance allows fewer formulas while controlling the risk. C-STLL consistently produces the smallest sets, as it directly optimizes for early stopping once a sufficient quality level is reached. C-STLL with stopping-rule-only variant yields larger sets, since it lacks additional structural constraints and thus accumulates more candidate formulas before termination.

Fig. \ref{result}(c) illustrates the average normalized complexity of formulas in the selected sets. Schemes that explicitly penalize complexity achieve consistently lower complexity values. C-STLL maintains relatively low complexity due to its early-stopping mechanism. In contrast, the diversity-aware and Bonferroni-based schemes result in higher complexity on average, since diversity and conservative testing both favor retaining richer temporal structures to hedge against different failure modes. 

Fig. \ref{result}(d) shows the average diversity of the learned STL sets. As expected, C-STLL with diversity check and stopping rule consistently achieves high diversity.  C-STLL with complexity check and stopping rule scheme shows lower diversity than other variants, since restricting formula complexity inherently limits structural variation. 

Finally, we consider the setting in which labels are assigned by an LLM-as-a-judge. Fig.~\ref{result2} reports the corresponding test performance as function of the risk tolerance $\epsilon$ on the test dataset. All C-STLL variants successfully control the empirical risk below the target risk tolerance $\epsilon$, validating the effectiveness of the proposed framework also under subjective, QoE-driven labels. Compared to Fig. \ref{result}, we observe moderately larger average set sizes and complexity values for most schemes, reflecting the increased ambiguity and variability induced by LLM-based labeling. Nevertheless, C-STLL yields the smallest sets due to its early-stopping generation approach, and the different variants of  C-STLL provide relative gains that are consistent with those discussed in Fig. \ref{result}.

\section{Conclusions} \label{sec:conclusion}

This paper has introduced conformal signal temporal logic learning (C-STLL), a novel framework for extracting interpretable temporal specifications from wireless network KPI traces with formal reliability guarantees. By combining the expressiveness of STL with the distribution-free coverage guarantees of modern hyperparameter selection methods, C-STLL addresses both \emph{relevance} -- capturing temporal patterns predictive of high-level requirements such as QoE -- and \emph{interpretability} -- providing human-readable specifications that network operators can validate, understand, and act upon.

The key insight behind C-STLL is to move beyond learning a single STL formula to constructing a calibrated \emph{set} of formulas with provable guarantees. Through sequential generation with principled acceptance and stopping rules, C-STLL ensures that the returned set contains at least one formula achieving a target accuracy level with high probability. The acceptance rules, based on complexity and diversity thresholds, promote interpretability by favoring simpler formulas while encouraging structural variety to capture different aspects of the underlying temporal patterns. The calibration procedure, grounded in the Learn Then Test (LTT) framework and implemented via Pareto testing followed by fixed-sequence testing \cite{quach2023conformal}, efficiently searches the hyperparameter space while maintaining rigorous statistical validity.

Experimental results on a mobile gaming scenario simulated using ns-3 have demonstrated the effectiveness of C-STLL.  Compared to standard STLL \cite{li2024tlinet}, which returns a single formula without reliability assurances, C-STLL produced compact sets of diverse, low-complexity formulas that achieved high accuracy on validation data. Ablation studies confirmed the importance of each component: complexity checks yielded more interpretable formulas, diversity checks prevented redundancy, and the stopping rule enabled efficient early termination once sufficient quality was achieved.

Several directions merit further investigation. First, there may be QoE requirements that may be better expressed in terms of functions of the KPIs. Developing solutions that account for this, while preserving interpretability is an open challenge.  Second, extending C-STLL to an \emph{online} setting, where KPI traces arrive sequentially and the formula set must be updated incrementally, would enhance its applicability to real-time network monitoring. Third, incorporating \emph{multi-task} learning across heterogeneous network conditions and application types could improve generalization and reduce the calibration data requirements. Fourth, Third, validating C-STLL with real QoE measurements, as well as on a real O-RAN architecture, would support practical deployment for automated KPI-to-requirement translation in softwarized networks. Finally, investigating the use of large language models to generate natural-language explanations of learned STL formulas could further bridge the gap between formal specifications and operator understanding.

\section*{Appendix A: Temporal Logic Inference Network}

This appendix reviews TLINet \cite{li2024tlinet}, which learns STL formulas by integrating quantitative STL semantics with neural network optimization.

\noindent \textbf{vSTL Representation: } TLINet uses vectorized STL (vSTL), where Boolean operators $\land / \lor$ and temporal operators $\Box /\Diamond$ are parameterized by binary vectors. A binary vector $\mv w_b \in \{0,1\}^n$ selects active subformulas in Boolean operations, while $\mv w_I \in \{0,1\}^T$ encodes the time interval $[t_1, t_2]$ for temporal operators, with $w^t_I = 1$ if $t_1 \leq t \leq t_2$. The robustness computations in \eqref{s1}–\eqref{te1} are reformulated accordingly over these binary vectors.

\noindent \textbf{TLINet Architecture:} As illustrated in Fig. \ref{TLI}, TLINet contains predicate, temporal, and Boolean modules arranged in layers. The predicate module takes trajectory $\mv X$ as input and outputs robustness vectors using trainable parameters $\mv a$ and $b$. Temporal modules learn the interval bounds $t_1, t_2$ via smooth approximations of $\mv w_I$, and select between $\Box$ and $\Diamond$ using a Bernoulli variable $\kappa$ with trainable probability $p_{\kappa}$. Boolean modules similarly learn subformula selection via Bernoulli variables for $\mv w_b$ and operator type, either $\land$ or $\lor$, via another Bernoulli variable $\kappa_b$. The output robustness $\rho(\mv X, \phi, 0)$ serves as the decision score. The training objective combines the hinge classification loss in \eqref{loss} with regularizers promoting binarization of all probabilities via $p(1-p)$ terms and sparsity in Boolean selection via $\ell_1$ regularization on $p_b$. Parameters are optimized using stochastic gradient descent.

\section*{Appendix B: LTT via Pareto Testing}
The calibration set $\mathcal{D}^{\rm cal}$ is partitioned into $\mathcal{D}^{\rm cal}_1$ with $K_1$ pairs for Pareto testing and $\mathcal{D}^{\rm cal}_2$ with $K_2$ pairs for fixed-sequence testing.

\noindent \textbf{Pareto testing:} For each candidate $\mv \lambda \in \Lambda$, we run C-STLL on $\mathcal{D}^{\rm cal}_1$ to compute the empirical risk 
\begin{align}
    \hat{R}(\mv \lambda,\mathcal{D}^{\rm cal}_1) = \frac{1}{K_1}  \sum_{k=1}^{K_1} \mathbbm{1}(\nexists \phi \in \mathcal{C}_{\scalebox{0.7}{\mv \lambda}}(\mathcal{D}_{k,1}): A(\phi, \mathcal{D}^{\rm v}_{k,1})=1) \notag
\end{align}
and average set size 
\begin{align}
    |\mathcal{C}_{\scalebox{0.7}{\mv \lambda}}(\mathcal{D}^{\rm cal}_1)|=\frac{1}{K_1} \sum_{k=1}^{K_1} |\mathcal{C}_{\scalebox{0.7}{\mv \lambda}}(\mathcal{D}_{k,1})|. \notag
\end{align}
Since risk and set size are conflicting objectives, solving the multi-objective problem $\min_{\mv \lambda} \{\hat{R}(\mv \lambda,\mathcal{D}^{\rm cal}_1),  |\mathcal{C}_{\scalebox{0.7}{\mv \lambda}}(\mathcal{D}^{\rm cal}_1)| \} $ yields a Pareto frontier $\bar{\Lambda}\subseteq\Lambda$. The hyperparameters in $\bar{\Lambda}$ are then ordered by increasing p-values as defined in \eqref{pv}.

\noindent \textbf{Fixed-sequence testing:} Using $\mathcal{D}^{\rm cal}_2$, we test the ordered hyperparameters sequentially. For each $\mv \lambda_{\pi(i)}$, we compute the empirical risk 
\begin{align}
    \hat{R}(\mv \lambda_{\pi(i)},\mathcal{D}^{\rm cal}_2) = \frac{1}{K_2}  \sum_{k=1}^{K_2} \mathbbm{1}(\nexists \phi \in \mathcal{C}_{\scalebox{0.7}{\mv \lambda}_{\pi(i)}}(\mathcal{D}_{k,2}): A(\phi, \mathcal{D}^{\rm v}_{k,2})=1), \notag
\end{align}
and its p-value 
\begin{align}
    p(\mv \lambda_{\pi(i)})=\Pr(b(K_2, \epsilon) \leq K_2 \hat{R}(\mv \lambda_{\pi(i)}, \mathcal{D}^{\rm cal}_2)). \notag
\end{align} 
The hypothesis in \eqref{hypo} is rejected if $p(\mv \lambda_{\pi(i)}) < \delta$. Testing proceeds until the first non-rejected hypothesis at index $\pi(i^*)$, yielding $\Lambda_{\rm valid}=\{\mv \lambda_{\pi(1)}, \ldots, \mv \lambda_{\pi(i^*-1)}\}$. The final selection $\mv \lambda^* \in \Lambda_{\rm valid}$ minimizes the average set size $|\mathcal{C}_{\scalebox{0.7}{\mv \lambda}^*}(\mathcal{D}^{\rm cal}_2)|$.

\section*{Appendix C: Proof of Theorem 1}

The reliability condition \eqref{goal} follows from the validity of the p-value in \eqref{pv} combined with the FWER control of fixed-sequence testing.

To show that the p-value is valid, let $S=|\mathcal{D}^{\rm cal}|\hat{R}(\mv \lambda, \mathcal{D}^{\rm cal})$ and note that $S \sim b(|\mathcal{D}^{\rm cal}|, R(\mv \lambda))$ since calibration pairs are i.i.d. For any $\alpha \in [0, 1]$, because the binomial CDF $F_{\epsilon}(\cdot)$ is nondecreasing, the event $\{F_{\epsilon}(S)\leq \alpha\}$ implies $S\leq c_{\alpha}$, where $c_{\alpha}$ is the $\alpha$-quantile index. Under the null hypothesis \eqref{hypo}, the binomial CDF is nonincreasing in its success probability, so $\Pr(S \leq c_{\alpha}) \leq F_{\epsilon}(c_{\alpha}) \leq \alpha$. This confirms the inequality $\Pr(p(\mv \lambda) \leq \alpha | \mathcal{H}_{\scalebox{0.7}{\mv \lambda}}) \leq \alpha$.

Since fixed-sequence testing controls the FWER at level $\delta$ \cite{wiens2003fixed}, with probability at least $1- \delta$ no true null hypothesis is rejected, and the selected $\mv \lambda^*$ satisfies the reliability condition \eqref{goal}.

\small{
\bibliographystyle{ieeetr}
\bibliography{references}
}

\end{document}